\def\bPhi{{\mathbf{\Phi}}}
\def\bGam{{\mathbf{\Gamma}}}
\def\bSig{{\mathbf{\Sigma}}}
\def\bPsi{{\mathbf{\Psi}}}
\def\bc{{\mathbf{c}}}
\def\be{{\mathbf{e}}}
\def\bg{{\mathbf{g}}}
\def\bh{{\mathbf{h}}}
\def\bv{{\mathbf{v}}}
\def\bw{{\mathbf{w}}}
\def\bx{{\mathbf{x}}}
\def\by{{\mathbf{y}}}
\def\bz{{\mathbf{z}}}
\def\bA{{\mathbf{A}}}
\def\bB{{\mathbf{B}}}
\def\bC{{\mathbf{C}}}
\def\bD{{\mathbf{D}}}
\def\bF{{\mathbf{F}}}
\def\bG{{\mathbf{G}}}
\def\bH{{\mathbf{H}}}
\def\bI{{\mathbf{I}}}
\def\bK{{\mathbf{K}}}
\def\bN{{\mathbf{N}}}
\def\bP{{\mathbf{P}}}
\def\bQ{{\mathbf{Q}}}
\def\bS{{\mathbf{S}}}
\def\bT{{\mathbf{T}}}
\def\bU{{\mathbf{U}}}
\def\bW{{\mathbf{W}}}
\def\b0{{\boldsymbol{0}}}
\newlength{\figwidth}
\newlength{\figwidthb}
\theoremstyle{definition}
\newtheorem{thm}{Theorem}
\newtheorem{lem}{Lemma}
\newtheorem{rem}{Remark}
\newtheorem{cor}{Corollary}
\begin{document}
\DeclareGraphicsExtensions{.eps}

\title{On the Optimality of Linear Precoding for Secrecy in the MIMO Broadcast
Channel}

\author{\IEEEauthorblockN{S. Ali. A. Fakoorian and A.~Lee~Swindlehurst\thanks{This work was supported by the U.S. Army Research Office under the
Multi-University Research Initiative (MURI) grant W911NF-07-1-0318,
and by the U.S. National Science Foundation under grant CCF-1117983.}}
}

\maketitle

\begin{abstract}
\begin{singlespace}
We study the optimality of linear precoding for the two-receiver
multiple-input multiple-output (MIMO) Gaussian broadcast channel (BC)
with confidential messages.  Secret dirty-paper coding (S-DPC) is
optimal under an input covariance constraint, but there is no
computable secrecy capacity expression for the general MIMO case under
an average power constraint.  In principle, for this case, the secrecy
capacity region could be found through an exhaustive search over the
set of all possible matrix power constraints. Clearly, this search,
coupled with the complexity of dirty-paper encoding and decoding,
motivates the consideration of low complexity linear precoding as an
alternative.  We prove that for a two-user MIMO Gaussian BC under an
input covariance constraint, linear precoding is optimal and achieves the same
secrecy rate region as S-DPC if the input covariance constraint satisfies a
specific condition, and we characterize the corresponding optimal
linear precoders.  We then use this result to derive a  closed-form sub-optimal
algorithm based on linear precoding for an average power constraint.
Numerical results indicate that the secrecy rate region achieved
by this algorithm is close to that obtained by the optimal
S-DPC approach with a search over all suitable input covariance
matrices.
\end{singlespace}
\end{abstract}

\section{Introduction}
The work of Wyner \cite{Wyner} led to the development of the notion
of secrecy capacity, which quantifies the maximum rate at which a
transmitter can reliably send a secret message to a receiver,
without an eavesdropper being able to decode it.  More recently,
researchers have considered secrecy for the two-user broadcast
channel, where each receiver acts as an eavesdropper for the
independent message transmitted to the other.  This problem
was addressed in \cite{Yates08}, where inner and outer bounds for the
secrecy capacity region were established. Further work in \cite{Liu09}
studied the multiple-input single-output (MISO) Gaussian case, and
\cite{LiuLiu10} considered the general MIMO Gaussian case. It was
shown in \cite{LiuLiu10} that, under an input covariance constraint,
both confidential messages can be simultaneously communicated at their
respective maximum secrecy rates, where the achievablity is obtained
using secret dirty-paper coding (S-DPC). However, under an average
power constraint, a computable secrecy capacity expression for the
general MIMO case has not yet been derived. In principle, the secrecy
capacity for this case could be found by an exhaustive search
over the set of all input covariance matrices that satisfy the average
power constraint \cite{LiuLiu10}. Clearly, the complexity associated with such a
search and the implementation of dirty-paper encoding and decoding
make such an approach prohibitive except for very simple scenarios,
and motivates the study of simpler techniques based on linear
precoding.

While low-complexity linear transmission techniques have been
extensively investigated for the broadcast channel (BC) without
secrecy constraints, e.g., \cite{Prof04}-\cite{Wiesel}, there has been
relatively little work on considering secrecy in the design of linear
precoders for the BC case.  In \cite{oursGlobec11}, we considered
linear precoders for the MIMO Gaussian broadcast channel with
confidential messages based on the generalized singular value
decomposition (GSVD) \cite{KhistiMIMO,oursISIT12}. It was shown
numerically in \cite{oursGlobec11} that, with an optimal allocation
of power for the GSVD-based precoder, the achievable secrecy rate
is very close to the secrecy capacity region.

In this paper, we show that for a two-user MIMO Gaussian BC with
arbitrary numbers of antennas at each node and under an input
covariance constraint, linear precoding  is optimal and achieves the same secrecy rate
region as S-DPC for certain input covariance constraints, and we
derive an expression for the optimal precoders in these scenarios.  We
then use this result to develop a sub-optimal closed-form algorithm
for calculating linear precoders for the case of average power
constraints.  Our numerical results indicate that the secrecy rate
region achieved by this algorithm is close to that obtained
by the optimal S-DPC approach with a search over all suitable input
covariance matrices.

In Section \ref{secII}, we describe the model for the MIMO
Gaussian broadcast channel with confidential messages and the optimal
S-DPC scheme, proposed in \cite{LiuLiu10}. In Section \ref{secIII},
we consider a general MIMO broadcast channel under a matrix
covariance constraint, we derive the
conditions under which linear precoding is optimal and
achieves the
same secrecy rate region as S-DPC, and we find the corresponding
optimal precoders.  We then present our sub-optimal algorithm for
designing linear precoders for the case of an average power constraint
in Section~\ref{secIV}, followed by numerical examples in
Section~\ref{secV}. Section \ref{secVI} concludes the paper.

\textbf{Notation:} Vector-valued random variables are written with
non-boldface uppercase letters ({\em e.g.,} $X$), while the
corresponding non-boldface lowercase letter ($\bx$) denotes a specific
realization of the random variable. Scalar variables are written with
non-boldface (lowercase or uppercase) letters.  The Hermitian (i.e.,
conjugate) transpose is denoted by $(.)^H$, the matrix trace by Tr(.),
and \textbf{I} indicates an identity matrix.  The inequality $\bA
\succ \bB$ ($\bA\succeq\bB$) means that $\bA-\bB$ is Hermitian positive
(semi-)definite. Mutual information between the random variables $A$ and
$B$ is denoted by $I(A;B)$, $\mathbb{E}$ is the expectation operator,
and $\mathcal{CN}(0,\sigma^2)$ represents the complex circularly
symmetric Gaussian distribution with zero mean and variance
$\sigma^2$.

\section{Broadcast Channel and S-DPC}  \label{secII}
We consider a two-receiver multiple-antenna Gaussian broadcast channel
with confidential messages, where the transmitter, receiver~1 and
receiver~2 possess $n_t$, $m_1$, and $m_2$ antennas, respectively. The
transmitter has two independent confidential messages, $W_1$ and
$W_2$, where $W_1$ is intended for receiver~1 but needs to be kept
secret from receiver~2, and $W_2$ is intended for receiver~2 but needs
to be kept secret from receiver~1 \cite{LiuLiu10}.

The signals at each receiver can be written as:
\begin{align}\label{lin1}
\begin{split}
\by_1=\bH\bx+ \bz_1\\
\by_2=\bG\bx+ \bz_2
\end{split}
\end{align}
where $\bx$ is the $n_t\times1$ transmitted signal, and
$\bz_i\in\mathbb{C}^{m_i\times1}$ is white Gaussian noise
at receiver~$i$ with independent and identically distributed
entries drawn from $\mathcal{CN}(0, 1)$. The channel matrices
$\bH\in\mathbb{C}^{m_1\times n_t}$ and $\bG\in\mathbb{C}^{m_2\times
n_t}$ are assumed to be unrelated to each other, and known at all
three nodes.  The transmitted signal is subject to an average
power constraint when
\begin{equation}\label{lin2}
\mathrm{Tr}(\mathbb{E}\{XX^H\})=\mathrm{Tr}(\bQ)\leq P_t
\end{equation}
for some scalar $P_t$, or it is subject to a matrix power constraint
when \cite{LiuLiu10,Bustin}:
\begin{equation}\label{lin3}
\mathbb{E}\{XX^H\}=\bQ\preceq \bS
\end{equation}
where $\bQ$ is the transmit covariance matrix, and $\bS \succeq 0$.
Compared with the average power constraint, (\ref{lin3}) is rather precise
and inflexible, although for example it does allow for the incorporation
of per-antenna power constraints as a special case.

It was shown in \cite{Yates08} that for any jointly distributed $(V_1,
V_2,X)$ such that $(V_1, V_2)\rightarrow X \rightarrow(Y_1,Y_2)$ forms
a Markov chain and the power constraint over $X$ is satisfied, the
secrecy rate pair $(R_1,R_2)$ given by
\begin{align}\label{lin4}
\begin{split}
R_1&=I(V_1;Y_1)-I(V_1;V_2,Y_2)\\
R_2&=I(V_2;Y_2)-I(V_2;V_1,Y_1)
\end{split}
\end{align}
is achievable for the MIMO Gaussian broadcast channel given by
(\ref{lin1}), where the auxiliary variables $V_1$ and $V_2$ represent
the precoding signals for the confidential messages $W_1$ and $W_2$,
respectively \cite{LiuLiu10}. In \cite{Yates08}, the achievablity of
the rate pair~(\ref{lin4}) was proved.

Liu \emph{et al.} \cite{LiuLiu10} analyzed the above secret
communication problem under the matrix power-covariance constraint
(\ref{lin3}). They showed that the secrecy capacity region
$\mathcal{C}_s(\bH,\bG,\bS)$ is rectangular. This interesting result
implies that under the matrix power constraint, both confidential
messages $W_1$ and $W_2$ can be \emph{simultaneously} transmitted at
their respective maximal secrecy rates, as if over two separate MIMO
Gaussian wiretap channels. To prove this result, Liu \emph{et al.}
showed that the secrecy capacity of the MIMO Gaussian wiretap channel can also be achieved via a coding
scheme that uses artificial noise and random binning \cite[Theorem 2]{LiuLiu10}.

Under the matrix power constraint (\ref{lin3}), the achievablity of
the optimal corner point $(R_1^*,R_2^*)$ given by \cite[Theorem 1]{LiuLiu10}
\begin{align}\label{lin5}
\begin{split}
R_1^* & =\max_{0\preceq\bK_t\preceq\bS}\log\left|\bH\bK_t\bH^H+\bI\right|-\log\left|\bG\bK_t\bG^H+\bI\right|\\
R_2^* & =\log\left|\frac{\bG\bS\bG^H+\bI}{\bH\bS\bH^H+\bI}\right|+R_1^*
\end{split}
\end{align}
is obtained using dirty-paper coding based on double binning, or as
referred to in \cite{LiuLiu10}, secret dirty paper coding
(S-DPC). More precisely, let $\bK_t^*\succeq\b0$ maximize (\ref{lin5}),
and let
\begin{equation}\label{lin6}
V_1=U_1+\bF U_2 \qquad V_2=U_2 \qquad X=U_1+U_2 \; ,
\end{equation}
where $U_1$ and $U_2$ are two independent Gaussian vectors with zero
means and covariance matrices $\bK_t^*$ and $\bS-\bK_t^*$,
respectively, and the precoding matrix $\bF$ is defined as
$\bF=\bK_t^*\bH^H\left(\bH\bK_t^*\bH^H+\bI\right)^{-1}\bH$.  One can
easily confirm the achievablity of the corner point $(R_1^*,R_2^*)$ by
evaluating (\ref{lin4}) for the above random variables and noting that
in (\ref{lin1}), $X=U_1+U_2$.  Note that under the matrix power
constraint $\bS$, the input covariance matrix that achieves the corner
point in the secrecy capacity region satisfies $\bQ=\bS$
\cite{LiuLiu10}.

The matrix $\bK_t$ that maximizes (\ref{lin5}) is given by \cite{LiuLiu10,Bustin}
\begin{align}\label{lin8}
\bK^*_t= \bS^{\frac{1}{2}}\bC \left[
\begin{array}{ccc}
(\bC_1^H\bC_1)^{-1} & \b0\\
\b0 & \b0
\end{array}
\right]\bC^H \bS^{\frac{1}{2}}
\end{align}
where $\bC=[\bC_1 \; \bC_2]$ is an invertible\footnote{Note that $\bC$
is invertible since both components of the pencil (\ref{lin9}) are
positive definite.} generalized eigenvector matrix of the pencil
\begin{align}\label{lin9}
\left(\bS^{\frac{1}{2}}\bH^H\bH\bS^{\frac{1}{2}}+\bI\;,\; \bS^{\frac{1}{2}}\bG^H\bG\bS^{\frac{1}{2}}+\bI\right)
\end{align}
satisfying \cite{Horn}
\begin{align}\label{lin10}
\begin{split}
&\bC^H\left[\bS^{\frac{1}{2}}\bH^H\bH\bS^{\frac{1}{2}}+\bI\right]\bC=\mathbf{\Lambda}\\
&\bC^H\left[\bS^{\frac{1}{2}}\bG^H\bG\bS^{\frac{1}{2}}+\bI\right]\bC=\bI \; ,
\end{split}
\end{align}
where
$\mathbf{\Lambda}=\text{diag}\{\lambda_1,...,\lambda_{n_t}\}\succ\b0$
contains the generalized eigenvalues sorted without loss of generality
such that
$$\lambda_1\geq...\geq\lambda_b >1 \geq
\lambda_{b+1}\geq...\geq\lambda_{n_t}>0 \; .$$
The quantity $b$ denotes the number of generalized eigenvalues greater
than one $(0\leq b \leq n_t)$, and defines the following matrix partitions:
\begin{equation}
\mathbf{\Lambda}  = \left[
\begin{array}{ccc}
\mathbf{\Lambda}_1 & 0\\
0 & \mathbf{\Lambda}_2
\end{array}
\right] \qquad
\bC = [\bC_1\; \,\bC_2] \; ,
\label{lin11}
\end{equation}
where $\mathbf{\Lambda}_1=\text{diag}\{\lambda_1,...,\lambda_b\}$,
$\mathbf{\Lambda}_2=\text{diag}\{\lambda_{b+1},...,\lambda_{n_t}\}$,
$\bC_1$ contains the $b$ generalized eigenvectors corresponding to
$\mathbf{\Lambda}_1$ and $\bC_2$ the $(n_t-b)$ generalized
eigenvectors corresponding to $\mathbf{\Lambda}_2$.  Now, by applying
(\ref{lin8}) in (\ref{lin5}), the corner rate pair $(R_1^*,R_2^*)$ can
be calculated as (\cite[Theorem 3]{LiuLiu10})
\begin{align}\label{lin13}
\begin{split}
&R_1^*=\log\left|\mathbf{\Lambda}_1\right|\\
&R_2^*=-\log\left|\mathbf{\Lambda}_2\right|  \;.
\end{split}
\end{align}

For the average power constraint in~ (\ref{lin2}), there is no
computable secrecy capacity expression for the general MIMO case. In
principle the secrecy capacity region for the average power
constraint, $\mathcal{C}_s(\bH,\bG,P_t)$, could be found
through an exhaustive search over all suitable matrix power constraints
\cite{LiuLiu10},\cite[Lemma 1]{Weingarten}:
\begin{align}\label{lin14}
\mathcal{C}_s(\bH,\bG,P_t)=\bigcup_{\bS\succeq 0, \mathrm{Tr}(\bS)\leq P_t}
\mathcal{C}_s(\bH,\bG,\bS) \; .
\end{align}
For any given semidefinite $\bS$, $\mathcal{C}_s(\bH,\bG,\bS)$ can be
computed as given by (\ref{lin13}). Then, the secrecy capacity region
$\mathcal{C}_s(\bH,\bG,P_t)$ is the convex hull of all of the obtained
corner points using (\ref{lin13}).

The complexity associated with such a search, as well as that required
to implement dirty-paper encoding and decoding, are the main drawbacks
of using S-DPC to find the secrecy capacity region
$\mathcal{C}_s(\bH,\bG,P_t)$ for the average power constraint. This
makes linear precoding (beamforming) techniques an attractive
alternative because of their simplicity.  To address the performance
achievable with linear precoding, we first describe the conditions
under which linear precoding is optimal in attaining the same secrecy rate region that is achievable
via S-DPC, when the broadcast channel is under an input covariance constraint.
In particular, in the next section we
show that this equivalence holds for matrix power constraints that
satisfy a certain property, and we derive the linear precoders that
achieve optimal performance.  Section~\ref{secIV} then uses these
results to derive a sub-optimal algorithm for the case of the average
power constraint.

\section{Optimality of Linear Precoding for BC Secrecy}\label{secIII}

In this section we answer the following questions:
\begin{enumerate}[(a)]
\item For a given general MIMO Gaussian BC described by (\ref{lin1}),
where each node has an arbitrary number of antennas and the channel
input is under the covariance constraint (\ref{lin3}), is there any
$\bS\succeq\b0$ for which linear precoding can attain the secrecy
capacity region?
\item If yes, how can such $\bS$ be described?
\item For such $\bS$, what is the optimal linear precoder that allows
the rectangular S-DPC capacity region given by~(\ref{lin13}) to be
achieved?
\item If $\bS$ does not satisfy the condition for optimal linear
precoding in (a), what is the worst-case loss in secrecy capacity
incurred by using the linear precoding approach described in (b)
anyway?
\end{enumerate}
To begin, we give the following theorem as an answer to questions
(a) and (b) above.

\begin{thm}\label{lin_thm1}
Suppose the matrix power constraint $\bS\succeq\b0$ on the input covariance
$\bQ$ in~(\ref{lin3}) leads to generalized eigenvectors
in~(\ref{lin10}) that satisfy
$\mathrm{span}\{\bC_1\}\perp\mathrm{span}\{\bC_2\}$,
i.e. $\bC_1^H\bC_2=\b0$.  Then the secrecy capacity region
$\mathcal{C}_s(\bH,\bG,\bS)$ can be achieved with
$X=V_1+V_2$, where $V_1$ and $V_2$ are \emph{independent}
Gaussian precoders respectively corresponding to
$W_1$ and $W_2$, with zero means and covariance matrices $\bK^*_t$ and
$\bS-\bK^*_t$, with $\bK^*_t$ defined in~(\ref{lin8}).
\end{thm}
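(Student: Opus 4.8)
My plan is to show that the dirty-paper term $\bF$ is superfluous under the stated hypothesis: I will evaluate the achievable region (\ref{lin4}) for \emph{independent} Gaussian inputs $V_1,V_2$ with covariances $\bK_t^*$ and $\bS-\bK_t^*$, and verify that the resulting rate pair coincides with the S-DPC corner $(R_1^*,R_2^*)$ of (\ref{lin13}). Since $\mathcal{C}_s(\bH,\bG,\bS)$ is the rectangle whose Pareto-optimal vertex is $(R_1^*,R_2^*)$, attaining this single point is enough to achieve the entire region.

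First I would expand the four mutual informations in (\ref{lin4}) using $X=V_1+V_2$ and the independence of $V_1$ and $V_2$. Writing $\bK_1=\bK_t^*$ and $\bK_2=\bS-\bK_t^*$, the subtracted terms collapse to $I(V_1;Y_2\mid V_2)=\log|\bG\bK_1\bG^H+\bI|$ and $I(V_2;Y_1\mid V_1)=\log|\bH\bK_2\bH^H+\bI|$, since conditioning on the other (independent) precoder merely removes a known vector from the received signal. Carrying out the routine log-det bookkeeping, both requirements $R_1=R_1^*$ and $R_2=R_2^*$ reduce to the \emph{same} determinant-factorization identity
\begin{equation}\label{goal}
\left|\bH\bS\bH^H+\bI\right|=\left|\bH\bK_1\bH^H+\bI\right|\cdot\left|\bH\bK_2\bH^H+\bI\right|,
\end{equation}
so the whole theorem rests on establishing (\ref{goal}).

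To prove (\ref{goal}) I would exploit the structure of $\bK_t^*$. From (\ref{lin8}), $\bK_1=\bS^{\frac{1}{2}}\bP_1\bS^{\frac{1}{2}}$ with $\bP_1=\bC_1(\bC_1^H\bC_1)^{-1}\bC_1^H$ the orthogonal projector onto $\mathrm{span}\{\bC_1\}$; the hypothesis $\bC_1^H\bC_2=\b0$ together with the invertibility of $\bC$ makes $\mathrm{span}\{\bC_1\}$ and $\mathrm{span}\{\bC_2\}$ orthogonal complements, so $\bK_2=\bS^{\frac{1}{2}}\bP_2\bS^{\frac{1}{2}}$ with $\bP_2=\bI-\bP_1$ the complementary projector. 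Setting $\bM=\bS^{\frac{1}{2}}\bH^H\bH\bS^{\frac{1}{2}}$ and applying $|\bI+\bA\bB|=|\bI+\bB\bA|$, the three determinants in (\ref{goal}) become $|\bI+\bM|$, $|\bI+\bP_1\bM|$ and $|\bI+\bP_2\bM|$, so (\ref{goal}) is equivalent to $|\bI+\bM|=|\bI+\bP_1\bM|\,|\bI+\bP_2\bM|$.

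The crux, and the step I expect to be the main obstacle, is showing that $\bM$ decouples across the orthogonal splitting $\mathrm{span}\{\bC_1\}\oplus\mathrm{span}\{\bC_2\}$, i.e. $\bP_1\bM\bP_2=\b0$. This is exactly where the hypothesis does the work: the $(1,2)$ block of the first relation in (\ref{lin10}) reads $\bC_1^H(\bM+\bI)\bC_2=\b0$ because $\mathbf{\Lambda}$ is block diagonal, and subtracting $\bC_1^H\bC_2=\b0$ yields $\bC_1^H\bM\bC_2=\b0$, whence $\bP_1\bM\bP_2=\b0$ (and $\bP_2\bM\bP_1=\b0$ by Hermitian symmetry). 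Consequently $\bM=\bP_1\bM\bP_1+\bP_2\bM\bP_2$; in an orthonormal basis adapted to the two orthogonal subspaces this is genuinely block diagonal, so $|\bI+\bM|$ factors across the blocks, and since $\bP_i\bM=\bP_i\bM\bP_i$ the factorization is precisely $|\bI+\bM|=|\bI+\bP_1\bM|\,|\bI+\bP_2\bM|$, establishing (\ref{goal}) and hence the theorem. The conceptual content is that orthogonality of the generalized eigenvector blocks means the two users' signals, after the whitening implicit in the pencil (\ref{lin9}), occupy orthogonal subspaces, leaving no cross-interference for $\bF$ to pre-cancel; absent the hypothesis, the surviving term $\bP_1\bM\bP_2$ is exactly what S-DPC is required to remove.
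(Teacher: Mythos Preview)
Your proof is correct and takes a cleaner route than the paper's. Both arguments begin the same way: reduce to the corner point, expand the mutual informations in (\ref{lin4}) using independence of $V_1,V_2$, and write $\bK_t^*=\bS^{1/2}\bP_{\bC_1}\bS^{1/2}$, $\bS-\bK_t^*=\bS^{1/2}\bP_{\bC_2}\bS^{1/2}$ via the orthogonality hypothesis. The paper then (Appendix~A) substitutes the explicit inverse relations $\bH^H\bH=\bS^{-1/2}(\bC^{-H}\mathbf{\Lambda}\bC^{-1}-\bI)\bS^{-1/2}$ and $\bG^H\bG=\bS^{-1/2}(\bC^{-H}\bC^{-1}-\bI)\bS^{-1/2}$ to compute each of the three determinants $|\bI+\bS\bH^H\bH|$, $|\bI+(\bS-\bK_t^*)\bH^H\bH|$, $|\bI+\bK_t^*\bG^H\bG|$ separately in terms of $\bC_1,\bC_2,\mathbf{\Lambda}_1,\mathbf{\Lambda}_2$, with orthogonality entering as the block-diagonality of $\bC^H\bC$. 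You instead observe that both rate identities $R_i=R_i^*$ collapse to the single factorization $|\bI+\bM|=|\bI+\bP_1\bM|\,|\bI+\bP_2\bM|$ with $\bM=\bS^{1/2}\bH^H\bH\bS^{1/2}$, and prove this in one stroke by reading $\bC_1^H\bM\bC_2=\b0$ off the $(1,2)$ block of (\ref{lin10}), i.e.\ $\bM$ is block diagonal with respect to the orthogonal splitting $\mathrm{span}\{\bC_1\}\oplus\mathrm{span}\{\bC_2\}$. Your approach avoids inverting $\bS$ and $\bC$, never touches $\bG$ or $\mathbf{\Lambda}$ in the determinant calculations, and makes transparent \emph{why} orthogonality suffices (no cross-block of $\bM$ means nothing for dirty-paper coding to cancel); the paper's computation, on the other hand, delivers the individual determinant values $|(\bC_i^H\bC_i)^{-1}|\cdot|\mathbf{\Lambda}_i|$ etc., which are later reused in the proof of Theorem~\ref{lin_lem2}.
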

\begin{proof}
Recall that for any $\bS\succeq\b0$, the secrecy capacity region
$\mathcal{C}_s(\bH,\bG,\bS)$ is rectangular, so we only need to show
that when $\bC_2^H\bC_1=\b0$, the linear precoders $V_1$ and
$V_2$ characterized in this theorem are capable of achieving the
corner point $(R^*_1,R^*_2)$ given by~(\ref{lin13}).
From (\ref{lin4}), the achievable secrecy rate $R_1$ is given by
\begin{align}
R_1&=I(V_1;Y_1)-I(V_1;V_2)-I(V_1;Y_2|V_2) =I(V_1;Y_1)-I(V_1;Y_2|V_2)  \label{lin16}\\
&=I(V_1;\,\bH(V_1+V_2)+Z_1) -I(V_1;\,\bG(V_1+V_2)+Z_2|V_2)\nonumber\\
&=I(V_1;\,\bH(V_1+V_2)+Z_1) -I(V_1;\,\bG V_1+Z_2)\label{lin17}\\
&=\log\left|\mathbf{\Lambda}_1\right|=R^*_1 \; , \label{lin18}
\end{align}
where (\ref{lin16}) and the second part of (\ref{lin17}) come from
the fact that $V_1$ and $V_2$ are independent. Equation~(\ref{lin18})
is proved in Appendix A.  One can similarly show that
$R_2=R_2^*=-\log\left|\mathbf{\Lambda}_2\right|$ is achievable to
complete the proof.
\end{proof}

Theorem~\ref{lin_thm1} shows that the secrecy capacity region
corresponding to any $\bS$ with orthogonal $\bC_1$ and $\bC_2$ can be
achieved using either linear independent precoders $V_1$ and $V_2$, as
defined in Theorem \ref{lin_thm1}, or using the S-DPC approach, as
given by (\ref{lin6}). The next theorem expands on the answer to question (b) above, and also addresses (c).  First however we present the following lemma which holds for any $\bS\succeq\b0$.

\begin{lem}\label{lin_lem1}
For a given BC under the matrix power constraint (\ref{lin3}), for any
$\bS\succeq\b0$ we have $\mathrm{rank}(\bC_1)\le m$,
where $m$ is the number of positive eigenvalues of the matrix
$\bH^H\bH-\bG^H\bG$.
\end{lem}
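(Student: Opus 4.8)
The plan is to connect the condition $\lambda_i > 1$ that selects the columns of $\bC_1$ to the positive-definiteness of $\bM := \bH^H\bH - \bG^H\bG$ on a suitable subspace, and then to invoke the variational (inertia) characterization of the number of positive eigenvalues. The starting point would be to subtract the two normalization identities in (\ref{lin10}). Since $\bC^H[\bS^{\frac12}\bH^H\bH\bS^{\frac12} + \bI]\bC = \mathbf{\Lambda}$ and $\bC^H[\bS^{\frac12}\bG^H\bG\bS^{\frac12} + \bI]\bC = \bI$, subtracting cancels the identity terms and yields the clean relation
\[
\bC^H \bS^{\frac12}\bM\bS^{\frac12}\bC = \mathbf{\Lambda} - \bI .
\]
Restricting to the block corresponding to $\bC_1$ (the eigenvectors with $\lambda_i > 1$) then gives $\bC_1^H \bS^{\frac12}\bM\bS^{\frac12}\bC_1 = \mathbf{\Lambda}_1 - \bI$, which is positive definite because every diagonal entry of $\mathbf{\Lambda}_1$ exceeds one.

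Next I would read this as the statement that $\bM$ is positive definite on the subspace $\mathcal{V} = \bS^{\frac12}\,\mathrm{span}\{\bC_1\}$: for any nonzero $\ba \in \mathbb{C}^{b}$ and $\bv = \bS^{\frac12}\bC_1\ba$, one has $\bv^H\bM\bv = \ba^H(\mathbf{\Lambda}_1 - \bI)\ba > 0$. The positive-definiteness of $\mathbf{\Lambda}_1 - \bI$ simultaneously forces $\bS^{\frac12}\bC_1$ to have full column rank, since otherwise the $b\times b$ matrix $\bC_1^H\bS^{\frac12}\bM\bS^{\frac12}\bC_1$ could not be nonsingular. Hence $\dim\mathcal{V} = b = \mathrm{rank}(\bC_1)$, the rank equality holding because $\bC=[\bC_1\;\bC_2]$ is invertible.

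Finally I would conclude using the fact that the number of positive eigenvalues of a Hermitian matrix equals the maximal dimension of a subspace on which that matrix is positive definite (Sylvester's law of inertia, equivalently the Courant--Fischer min--max theorem). Since $\bM$ is positive definite on the $b$-dimensional subspace $\mathcal{V}$, its number of positive eigenvalues satisfies $m \ge b = \mathrm{rank}(\bC_1)$, which is exactly the claimed inequality.

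The main obstacle I anticipate is the dimension-counting step when $\bS$ is singular: a priori $\bS^{\frac12}$ could collapse $\mathrm{span}\{\bC_1\}$ to a strictly lower-dimensional image, which would undercut the bound. The point that rescues the argument is that the restricted quadratic form is exactly $\mathbf{\Lambda}_1 - \bI \succ \b0$, so $\bS^{\frac12}$ cannot be singular on $\mathrm{span}\{\bC_1\}$; this is the one place where the specific structure of the pencil (\ref{lin9})--(\ref{lin10}), rather than a generic subspace argument, is essential.
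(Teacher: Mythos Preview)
Your proof is correct and takes essentially the same approach as the paper: subtract the two identities in (\ref{lin10}) restricted to the $\bC_1$ block to obtain $\bC_1^H\bS^{1/2}(\bH^H\bH-\bG^H\bG)\bS^{1/2}\bC_1 = \mathbf{\Lambda}_1 - \bI \succ \b0$, and conclude $\mathrm{rank}(\bC_1)\le m$. Your explicit inertia/variational argument and your handling of singular $\bS$ spell out details that the paper leaves implicit.
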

\begin{proof}
Please see Appendix B.
\end{proof}

The following theorem presents a more specific condition on $\bS$
that results in generalized eigenvectors that satisfy
$\mathrm{span}\{\bC_1\}\perp\mathrm{span}\{\bC_2\}$.
\begin{thm}\label{lin_thm2}
For any $\bS\succeq\b0$,
the generalized eigenvectors $\bC_1$ and $\bC_2$ in~(\ref{lin10})
are orthogonal \emph{iff} there exists a matrix $\bT\in
\mathbb{C}^{n_t \times n_t}$ such that $\bS=\bT\bT^H$ and $\bT$
simultaneously block diagonalizes $\bH^H\bH$ and $\bG^H\bG$:
\begin{equation}\label{linshrink1}
\bT^H\bH^H\bH\bT = \left[\begin{array}{ccc}\bK_{\bH1} & \b0\\
\b0 & \bK_{\bH2}\end{array}\right]  \qquad
\bT^H\bG^H\bG\bT = \left[\begin{array}{ccc}\bK_{\bG1} & \b0\\
\b0 & \bK_{\bG2}\end{array}\right] \;,
\end{equation}
where the $m\times m$ matrices
$\bK_{\bH1}\succeq\b0$ and $\bK_{\bG1}\succeq\b0$ satisfy $\bK_{\bH1}
\succeq\bK_{\bG1}$ and $\bK_{\bH2} \preceq\bK_{\bG2}$.
\end{thm}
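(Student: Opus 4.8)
The plan is to strip away the $+\bI$ terms and reduce everything to a single congruence. Writing $\bA=\bS^{\frac12}\bH^H\bH\bS^{\frac12}$ and $\bB=\bS^{\frac12}\bG^H\bG\bS^{\frac12}$, the defining identities (\ref{lin10}) read $\bC^H(\bA+\bI)\bC=\bLam$ and $\bC^H(\bB+\bI)\bC=\bI$; subtracting gives $\bC^H(\bA-\bB)\bC=\bLam-\bI$ with $\bA-\bB=\bS^{\frac12}(\bH^H\bH-\bG^H\bG)\bS^{\frac12}$. The second observation is that any square $\bT$ with $\bS=\bT\bT^H$ can be written $\bT=\bS^{\frac12}\bW$ for a unitary $\bW$, so that $\bT^H\bH^H\bH\bT=\bW^H\bA\bW$ and $\bT^H\bG^H\bG\bT=\bW^H\bB\bW$. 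Hence the theorem is equivalent to the statement that $\bC_1^H\bC_2=\b0$ \emph{iff} there is a unitary $\bW$ simultaneously block-diagonalizing $\bA$ and $\bB$ into an $m\times m$ block and an $(n_t-m)\times(n_t-m)$ block with the stated ordering. This reformulation is what I would prove.

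For the forward direction, suppose $\bC_1^H\bC_2=\b0$. I would take thin QR (polar) factorizations $\bC_1=\bW_1\bR_1$ and $\bC_2=\bW_2\bR_2$, with $\bW_1,\bW_2$ having orthonormal columns and $\bR_1,\bR_2$ invertible; the orthogonality $\mathrm{span}\{\bC_1\}\perp\mathrm{span}\{\bC_2\}$ makes $\bW=[\bW_1\;\bW_2]$ unitary and $\bC=\bW\bR$ with $\bR=\mathrm{diag}(\bR_1,\bR_2)$ block diagonal. Substituting into the two identities yields $\bW^H(\bA+\bI)\bW=\bR^{-H}\bLam\bR^{-1}$ and $\bW^H(\bB+\bI)\bW=\bR^{-H}\bR^{-1}$; since $\bR^{-1}$ and $\bLam$ are block diagonal in compatible blocks, both right-hand sides are block diagonal, and therefore so are $\bW^H\bA\bW$ and $\bW^H\bB\bW$, whose diagonal blocks define $\bK_{\bH1},\bK_{\bH2},\bK_{\bG1},\bK_{\bG2}$. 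The orderings drop out of the congruence $\bR_j^H(\bK_{\bH j}-\bK_{\bG j})\bR_j=\bLam_j-\bI$: the first block gives $\bLam_1-\bI\succ\b0$, hence $\bK_{\bH1}\succeq\bK_{\bG1}$, and the second gives $\bLam_2-\bI\preceq\b0$, hence $\bK_{\bH2}\preceq\bK_{\bG2}$. Sylvester's law of inertia applied to $\bC^H(\bA-\bB)\bC=\bLam-\bI$ identifies the size of the first block with the number of positive eigenvalues of $\bA-\bB$, which is reconciled with $m$ through Lemma~\ref{lin_lem1}.

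For the converse, suppose such a $\bW$ (equivalently $\bT$) exists. I would diagonalize each block pencil separately: choose invertible $\bD_1,\bD_2$ with $\bD_j^H(\bK_{\bG j}+\bI)\bD_j=\bI$ and $\bD_j^H(\bK_{\bH j}+\bI)\bD_j$ diagonal, and set $\bC=\bW\,\mathrm{diag}(\bD_1,\bD_2)$. A direct check shows this $\bC$ satisfies (\ref{lin10}), and because $\bW$ is unitary the columns coming from the first block are automatically orthogonal to those from the second. It then remains to show that this block split is exactly the split into eigenvalues above and at-or-below $1$, so that the first block is $\bC_1$ and the second is $\bC_2$. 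This follows from the orderings: $\bK_{\bH1}\succeq\bK_{\bG1}$ forces the first-block eigenvalues to be $\ge 1$ and $\bK_{\bH2}\preceq\bK_{\bG2}$ forces the second-block eigenvalues to be $\le 1$, while the inertia count (again via Lemma~\ref{lin_lem1}, now pinning the first block to be strictly positive definite of size $m$) prevents any eigenvalue above $1$ from hiding in the second block. Hence $\bC_1^H\bC_2=\b0$.

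The main obstacle I anticipate is the bookkeeping at the eigenvalue $1$, which is exactly where the gap between the number $b$ of eigenvalues strictly above $1$ and the number $m$ of positive eigenvalues of $\bH^H\bH-\bG^H\bG$ lives. When $\bS$ is singular, $\ker\bS^{\frac12}$ contributes generalized eigenvalues equal to $1$ on which both $\bA$ and $\bB$ vanish, and congruence by the singular $\bS^{\frac12}$ can strictly lower the positive-eigenvalue count, so one only has $b\le m$ in general. Matching the theorem's $m\times m$ block (and the non-strict $\succeq$) then requires padding the first block with $m-b$ orthonormally chosen eigenvalue-$1$ directions, using the freedom in the eigenvectors at the repeated eigenvalue $1$, with Lemma~\ref{lin_lem1} guaranteeing the count is consistent. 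Verifying that such a choice can always be made orthogonally, and that it preserves both block-diagonalizations, is the delicate step; the generic nonsingular case, where $b=m$ and the two partitions coincide, is routine by comparison.
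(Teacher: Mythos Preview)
Your proposal is correct and takes a somewhat different route from the paper in the forward (``only if'') direction. You factor $\bC_1=\bW_1\bR_1$, $\bC_2=\bW_2\bR_2$ via QR, assemble the unitary $\bW=[\bW_1\;\bW_2]$, and read off the block diagonalization of $\bA,\bB$ directly from $\bW^H(\bA+\bI)\bW=\bR^{-H}\bLam\bR^{-1}$ and $\bW^H(\bB+\bI)\bW=\bR^{-H}\bR^{-1}$. The paper instead builds $\bC$ explicitly as $\bC=\bPhi_B(\bI+\bSig_B)^{-1/2}\bPhi_A$, where $\bPhi_B$ diagonalizes $\bB$ and $\bPhi_A$ diagonalizes $(\bI+\bSig_B)^{-1/2}\bPhi_B^H(\bA+\bI)\bPhi_B(\bI+\bSig_B)^{-1/2}$; it then computes $\bC^H\bC=\bPhi_A^H(\bI+\bSig_B)^{-1}\bPhi_A$ and argues this is block diagonal iff $\bPhi_A$ is, which forces $\bPhi_B^H\bA\bPhi_B$ to be block diagonal and yields $\bT=\bS^{1/2}\bPhi_B$. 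Your construction is more direct and avoids the two-step eigendecomposition; the paper's construction has the minor advantage that the resulting $\bT$ actually \emph{diagonalizes} (not just block diagonalizes) $\bG^H\bG$. For the converse (``if'') direction the two arguments are essentially the same: separately diagonalize each block pencil and stack. Your identification of the $b$-versus-$m$ bookkeeping at the eigenvalue~$1$ as the delicate point is apt; the paper's proof does not address this block-size matching explicitly either.
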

\begin{proof}
The proof begins by noting that if $\bS=\bT\bT^H$, then the pencil
in~(\ref{lin9}) and
\begin{align*}
\left(\bT^H\bH^H\bH\bT+\bI\;,\; \bT^H\bG^H\bG\bT+\bI\right)
\end{align*}
have exactly the same generalized eigenvalue matrix
$\mathbf{\Lambda}$, and thus the same secrecy capacity regions. The
remainder of the proof can be found in Appendix C.
\end{proof}

While algorithms exist to find $\bT$ that jointly block
diagonalizes $\bH^H\bH$ and $\bG^H\bG$ (see for example \cite{JBD} and
references therein), as mentioned in Appendix~C only those $\bT$ that
lead to $\bK_{\bH1} \succeq\bK_{\bG1}$ and $\bK_{\bH2}
\preceq\bK_{\bG2}$ are acceptable.  Later, we will
demonstrate that for any BC there are an infinite number of matrix 
constraints $\bS$ that can
achieve such a block diagonalization and hence allow for an optimal
linear precoding solution.

To conclude this section, we now answer question~(d) posed above.
Define the projection matrices $\bP_{\bC_i}=\bC_i(\bC_i^H\bC_i)^{-1}
\bC^H_i$ and $\bP^\perp_{\bC_i}=\bI-\bP_{\bC_i}$, and note that in
general, equation~(\ref{lin8}) is equivalent to
$\bK_t^*=\bS^{\frac{1}{2}}\bP_{\bC_1}\bS^{\frac{1}{2}}$ and
$\bS-\bK_t^*=\bS^{\frac{1}{2}}\bP_{\bC_1}^\perp\bS^{\frac{1}{2}}$.
When $\bC_1^H\bC_2 = \b0$, the optimal covariance matrices
for $V_1$ and $V_2$ also satisfy
\begin{align}
\bK_t^* & =  \bS^{\frac{1}{2}}\bP_{\bC_2}^\perp\bS^{\frac{1}{2}} \\
\bS-\bK_t^* & = \bS^{\frac{1}{2}}\bP_{\bC_2}\bS^{\frac{1}{2}} \; .
\end{align}
The following theorem explains the loss in secrecy that results when
linear precoding with these covariances is used for a matrix constraint 
$\bS$ that does {\em not} satisfy $\bC_1^H\bC_2 = \b0$.

\begin{thm}\label{lin_lem2}
Assume a linear precoding scheme $X=V_1+V_2$ for independent Gaussian
precoders $V_1$ and $V_2$ with zero means and covariance matrices
$\bK_t=\bS^{\frac{1}{2}}\bP^\perp_{\bC_2}\bS^{\frac{1}{2}}$ and
$\bS-\bK_t=\bS^{\frac{1}{2}}\bP_{\bC_2}\bS^{\frac{1}{2}}$,
respectively.  Also define
$\bN=\left(\bC_2^H\bP^\perp_{\bC_1}\bC_2\right)^{-1}
\bC_2^H\bP^\perp_{\bC_1}\bP^\perp_{\bC_2}\bC_1$.  The loss in secrecy
capacity that results from using this approach in the two-user BC is
\emph{at most} $\log\left|\bI+\bN^H\bN\right|$ for each user.  In
particular, the following secrecy rate pair is achievable:
\begin{align}\label{lin19}
\begin{split}
&R_1=\max(0,\, R_1^*-\log\left|\bI+\bN^H\bN\right|) =\max(0,\, \log\left|\mathbf{\Lambda}_1\right|-\log\left|\bI+\bN^H\bN\right|)\\
&R_2= \max(0,\, R_2^*-\log\left|\bI+\bN^H\bN\right|) =\max(0,\,-\log\left|\mathbf{\Lambda}_2\right|-\log\left|\bI+\bN^H\bN\right|)  \;.
\end{split}
\end{align}
\end{thm}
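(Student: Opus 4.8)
The plan is to evaluate the achievable pair $(R_1,R_2)$ directly from~(\ref{lin4}) for the independent Gaussian precoders specified in the theorem, and to compare them term by term with the corner point $(R_1^*,R_2^*)=(\log|\bLam_1|,-\log|\bLam_2|)$ of~(\ref{lin13}). Exactly as in~(\ref{lin16})--(\ref{lin17}), independence of $V_1$ and $V_2$ collapses the leakage term, so that $R_1=I(V_1;\bH(V_1+V_2)+Z_1)-I(V_1;\bG V_1+Z_2)$; the essential difference from Theorem~\ref{lin_thm1} is that, because $\bC_1^H\bC_2\neq\b0$, the interference $\bH V_2$ no longer disappears from the first term. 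Writing $\bA:=\bS^{\frac{1}{2}}\bH^H\bH\bS^{\frac{1}{2}}+\bI$ and $\bB:=\bS^{\frac{1}{2}}\bG^H\bG\bS^{\frac{1}{2}}+\bI$, for which~(\ref{lin10}) gives $\bC^H\bA\bC=\bLam$ and $\bC^H\bB\bC=\bI$, the first step is to turn these mutual informations into log-determinants,
\begin{equation*}
R_1=\log\frac{|\bI+\bH\bS\bH^H|}{|\bI+\bH(\bS-\bK_t)\bH^H|}-\log|\bI+\bG\bK_t\bG^H| .
\end{equation*}

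Next I would reduce every $m_i\times m_i$ determinant to an $n_t\times n_t$ one using Sylvester's identity together with the idempotence of the projectors in $\bK_t=\bS^{\frac{1}{2}}\bP^\perp_{\bC_2}\bS^{\frac{1}{2}}$ and $\bS-\bK_t=\bS^{\frac{1}{2}}\bP_{\bC_2}\bS^{\frac{1}{2}}$. Letting $\bW$ and $\bW_2$ be orthonormal bases of $\mathrm{span}\{\bC_2\}^\perp$ and $\mathrm{span}\{\bC_2\}$ respectively, this rewrites the three determinants as $|\bA|$, $|\bW_2^H\bA\bW_2|$ and $|\bW^H\bB\bW|$. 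The decisive simplification is the block-inverse identity (the Schur complement of a block equals the inverse of the corresponding block of the inverse): since $\bA^{-1}=\bC\bLam^{-1}\bC^H$, $\bB^{-1}=\bC\bC^H$ and $\bW^H\bC_2=\b0$, one obtains the clean relations $\bW^H\bA^{-1}\bW=(\bW^H\bC_1)\bLam_1^{-1}(\bW^H\bC_1)^H$ and $\bW^H\bB^{-1}\bW=(\bW^H\bC_1)(\bW^H\bC_1)^H$. Substituting these extracts precisely the factor $\log|\bLam_1|=R_1^*$ and leaves
\begin{equation*}
R_1=R_1^*-\log\left|(\bW^H\bB\bW)(\bW^H\bB^{-1}\bW)\right| .
\end{equation*}
An entirely parallel computation for $R_2$ (with the roles of $\bH$/$\bG$ and of $\bLam_1$/$\bLam_2$ interchanged) yields $R_2=R_2^*-\log|\bI+\bM^H\bM|$ with $\bM=(\bC_2^H\bC_2)^{-1}\bC_2^H\bC_1$.

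The final and hardest step is to show that the residual determinant equals $\log|\bI+\bN^H\bN|$ for the stated $\bN$, and that the \emph{same} penalty governs both users. I would do this by rewriting $(\bW^H\bB\bW)(\bW^H\bB^{-1}\bW)$ through the decomposition $\bP^\perp_{\bC_2}\bC_1=\bC_1-\bP_{\bC_2}\bC_1$ and repeated use of $|\bI+\bX\bY|=|\bI+\bY\bX|$, reducing it to a Gram matrix of the components of $\bC_1$ lying outside $\mathrm{span}\{\bC_2\}$ measured against those outside $\mathrm{span}\{\bC_1\}$; this is exactly the quantity captured by $\bN=(\bC_2^H\bP^\perp_{\bC_1}\bC_2)^{-1}\bC_2^H\bP^\perp_{\bC_1}\bP^\perp_{\bC_2}\bC_1$, and the symmetry of the principal angles between $\mathrm{span}\{\bC_1\}$ and $\mathrm{span}\{\bC_2\}$ forces $\bM$ and $\bN$ to share their nonzero singular values, so that $\log|\bI+\bM^H\bM|=\log|\bI+\bN^H\bN|$ and both users lose the same amount. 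The main obstacle is precisely this algebraic identification: disentangling the oblique projectors $\bP_{\bC_1},\bP_{\bC_2}$ from their orthogonal complements while keeping the determinant in the compact form $|\bI+\bN^H\bN|$. Finally, since a secrecy rate can always be driven to zero by not transmitting, I would floor each rate at $0$, which both produces the $\max(0,\cdot)$ in~(\ref{lin19}) and accounts for the ``at most'' in the claimed loss: when $\log|\bI+\bN^H\bN|$ exceeds $R_i^*$ the operational rate is $0$ and the true loss is only $R_i^*$.
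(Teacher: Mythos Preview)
Your plan is correct and leads to the same conclusion as the paper, but the route is genuinely different. The paper's argument in Appendix~D introduces the auxiliary matrix $\widehat{\bC}=[\bP^\perp_{\bC_2}\bC_1\;\;\bC_2]$, whose Gram matrix is block diagonal, and then uses the explicit identity
\[
\bC^{-1}\widehat{\bC}=\begin{bmatrix}\bI&\b0\\ \bN&\bI\end{bmatrix}
\]
to evaluate each of the three determinants in~(\ref{linap17}) separately; the matrix $\bN$ arises there as the lower-left block of this transition matrix. Your approach instead compresses $\bA$ and $\bB$ onto the orthonormal bases $\bW,\bW_2$, uses the Schur-complement/block-inverse identity to trade $|\bA|/|\bW_2^H\bA\bW_2|$ for $|\bW^H\bA^{-1}\bW|^{-1}$, and exploits $\bW^H\bC_2=\b0$ to extract $\log|\bLam_1|$ immediately. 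Both computations are ultimately driven by the oblique-projector formula for $\bC^{-1}$; the paper's version is more explicit bookkeeping, yours is a bit more conceptual and isolates the residual $\log|(\bW^H\bB\bW)(\bW^H\bB^{-1}\bW)|$ cleanly before identifying it.

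Two remarks on the ``hard step.'' First, your $R_2$ computation is not literally parallel to $R_1$'s, because the covariances are $\bP^\perp_{\bC_2}$ and $\bP_{\bC_2}$ (not $\bP_{\bC_1}$ and $\bP^\perp_{\bC_1}$); for $R_2$ you need $|\bW_2^H\bB^{-1}\bW_2|$, and here $\bW_2^H\bC_1\neq\b0$, so the simplification is of a different flavor---but it does produce your $\bM$ as stated. Second, the principal-angles argument is unnecessary: since $\bP^\perp_{\bC_1}\bC_1=\b0$, one has $\bP^\perp_{\bC_1}\bP^\perp_{\bC_2}\bC_1=-\bP^\perp_{\bC_1}\bP_{\bC_2}\bC_1=-\bP^\perp_{\bC_1}\bC_2(\bC_2^H\bC_2)^{-1}\bC_2^H\bC_1$, and therefore
\[
\bN=-(\bC_2^H\bP^\perp_{\bC_1}\bC_2)^{-1}(\bC_2^H\bP^\perp_{\bC_1}\bC_2)(\bC_2^H\bC_2)^{-1}\bC_2^H\bC_1=-\bM,
\]
so $\bN^H\bN=\bM^H\bM$ identically. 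This both closes your $R_2$ identification and, via $\bC^{-1}\bW=\bigl[\begin{smallmatrix}\bI\\ \bN\end{smallmatrix}\bigr](\bC_1^H\bP^\perp_{\bC_2}\bC_1)^{-1/2}$ for $\bW=\bP^\perp_{\bC_2}\bC_1(\bC_1^H\bP^\perp_{\bC_2}\bC_1)^{-1/2}$, gives $\bW^H\bB\bW=(\bC_1^H\bP^\perp_{\bC_2}\bC_1)^{-1/2}(\bI+\bN^H\bN)(\bC_1^H\bP^\perp_{\bC_2}\bC_1)^{-1/2}$ and hence $|(\bW^H\bB\bW)(\bW^H\bB^{-1}\bW)|=|\bI+\bN^H\bN|$ directly, dispatching the step you flagged as the main obstacle.
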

\begin{proof}
See Appendix D.
\end{proof}

\begin{rem}\label{lin_rem2}
Note that if $\bC_1$ and $\bC_2$ are orthogonal, then $\bN=\b0$ and
$R_i=R^*_i$, $i=1,2$, is achievable, as discussed in Theorem~1.
\end{rem}

\section{Sub-Optimal Solutions Under an Average Power Constraint}\label{secIV}

So far we have shown that if the broadcast channel (\ref{lin1}) is
under the matrix power constraint $\bS$ (\ref{lin3}), then linear
precoding as defined by Theorem \ref{lin_thm1} is an optimal solution
when $\bS$ satisfies the condition described in Theorem
\ref{lin_thm2}.  In the following we propose a suboptimal closed-form
linear precoding scheme for the general MIMO Gaussian BC under the
\emph{average} power constraint (\ref{lin2}), where as mentioned
earlier there exists no optimal closed-form solution that
characterizes the secrecy capacity region.  We begin with some
preliminary results, then we develop the algorithm for the general
MIMO case, and finally we present an alternative algorithm
specifically for the MISO case since it offers additional insight.

\subsection{Preliminary Results}

\begin{rem}\label{lin_remetx1}
Suppose that the input covariance matrix $\bQ$ leads to a point on the
Pareto boundary of the secrecy capacity region given by~(\ref{lin14})
under the average power constraint~(\ref{lin2}).  Then
$\mathrm{Tr}(\bQ)=P_t$ and $\bQ$ cannot have any component in the
nullspace of $\bH^H\bH+\bG^H\bG$, and thus
$\mathcal{C}_s(\bH,\bG,P_t)=\mathcal{C}_s(\bH_{eq},\bG_{eq},P_t)$,
where $\bH_{eq}=\bH\bU_p$, $\bG_{eq}=\bG\bU_p$ and $\bU_p$ contains
the singular vectors corresponding to the non-zero singular values of
$\bH^H\bH+\bG^H\bG$.
\end{rem}

According to Remark \ref{lin_remetx1}, we can assume without loss of
generality that $\bH^H\bH+\bG^H\bG$ is full-rank; otherwise, we could
replace $\bH,\bG$ with $\bH_{eq},\bG_{eq}$ and have an equivalent
problem where $\bH_{eq}^H\bH_{eq}+\bG_{eq}^H\bG_{eq}$ is full-rank and
the secrecy capacity region is the same (in such a case, $n_t$ would
then represent the number of transmitted data streams rather than the
number of antennas).  With this result, we have the following lemma.

\begin{lem}\label{lin_lem3}
Define
\begin{align}\label{lin20}
\bW=(\bH^H\bH+\bG^H\bG)^{-\frac{1}{2}}  \;.
\end{align}
Then $\bW\bH^H\bH\bW$ and $\bW\bG^H\bG\bW$ commute and hence
share the same set of eigenvectors:
\begin{align}\label{lin21}
\begin{split}
\bW\bH^H\bH\bW &=\bPhi_{\bw} \bSig_1 \bPhi_{\bw}^H\\
\bW\bG^H\bG\bW &=\bPhi_{\bw} \bSig_2 \bPhi_{\bw}^H \; ,
\end{split}
\end{align}
where $\bPhi_{\bw}$ is the (unitary) matrix of eigenvectors
and $\bSig_1 \succeq \b0, \bSig_2 \succeq \b0$ the
corresponding eigenvalues.
\end{lem}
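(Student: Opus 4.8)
The plan is to exploit the one algebraic identity that makes the commutation claim almost immediate. Write $\bA=\bH^H\bH$ and $\bB=\bG^H\bG$, both Hermitian positive semidefinite, and recall from Remark~\ref{lin_remetx1} that we may assume $\bA+\bB$ is full rank, so that $\bW=(\bA+\bB)^{-\frac{1}{2}}$ is a well-defined Hermitian positive-definite matrix. First I would record two structural facts about the matrices of interest: each of $\bW\bA\bW$ and $\bW\bB\bW$ is Hermitian, being a congruence $\bW(\cdot)\bW$ of a Hermitian matrix by the Hermitian $\bW$, and each is positive semidefinite, since for any $\bx$ we have $\bx^H\bW\bA\bW\bx=(\bW\bx)^H\bA(\bW\bx)\ge 0$ and likewise for $\bB$. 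These observations will later supply $\bSig_1\succeq\b0$ and $\bSig_2\succeq\b0$.

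The key step is the observation that the two matrices sum to the identity:
\begin{align*}
\bW\bA\bW+\bW\bB\bW=\bW(\bA+\bB)\bW=(\bA+\bB)^{-\frac{1}{2}}(\bA+\bB)(\bA+\bB)^{-\frac{1}{2}}=\bI .
\end{align*}
Hence $\bW\bB\bW=\bI-\bW\bA\bW$. Since every matrix commutes with the identity and with itself, the product in either order reduces to the same expression, $(\bW\bA\bW)(\bW\bB\bW)=\bW\bA\bW-(\bW\bA\bW)^2=(\bW\bB\bW)(\bW\bA\bW)$, so the two matrices commute.

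Finally I would invoke the standard spectral fact that two commuting Hermitian matrices are simultaneously unitarily diagonalizable: there exists a single unitary $\bPhi_{\bw}$ with $\bW\bA\bW=\bPhi_{\bw}\bSig_1\bPhi_{\bw}^H$ and $\bW\bB\bW=\bPhi_{\bw}\bSig_2\bPhi_{\bw}^H$, and by the positive-semidefiniteness noted in the first paragraph the diagonal factors satisfy $\bSig_1,\bSig_2\succeq\b0$. This is exactly the form claimed in~(\ref{lin21}), which completes the proof.

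I do not expect a genuine obstacle here; the only point demanding a line of care is confirming that $\bW$ is well-defined and Hermitian, which is precisely where the full-rank reduction of Remark~\ref{lin_remetx1} is needed. I would flag the sum-to-identity relation as the crux of the argument, since it converts what superficially looks like a nontrivial commutation statement into the trivial fact that everything commutes with $\bI$.
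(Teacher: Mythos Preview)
Your argument is correct and, in fact, more direct than the paper's. Both proofs rest on the same algebraic identity $\bW\bH^H\bH\bW+\bW\bG^H\bG\bW=\bI$, but you use it immediately to conclude commutation and then invoke the standard simultaneous-diagonalization theorem for commuting Hermitian matrices. The paper instead routes the argument through the generalized eigendecomposition of the pencil $\bigl(\bW\bH^H\bH\bW+\bI,\;\bW\bG^H\bG\bW+\bI\bigr)$: summing the two defining relations gives $3\,\widehat{\bC}^H\widehat{\bC}=\widehat{\mathbf{\Lambda}}+\bI$, which forces the generalized eigenvector matrix $\widehat{\bC}$ to be unitary-times-diagonal, and that unitary factor is then shown to diagonalize both $\bW\bH^H\bH\bW$ and $\bW\bG^H\bG\bW$. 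Your approach is shorter and self-contained; the paper's approach buys an explicit link between $\bPhi_{\bw}$ and the generalized eigenstructure used elsewhere in the paper, together with closed-form expressions $\bSig_1=3(\widehat{\mathbf{\Lambda}}^{-1}+\bI)^{-1}-\bI$ and $\bSig_2=3(\widehat{\mathbf{\Lambda}}+\bI)^{-1}-\bI$, which are convenient for the subsequent development.
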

\begin{proof}
See Appendix E.
\end{proof}

Without loss of generality, we assume that the
columns of $\bPhi_{\bw}$ are sorted such that the first $\rho$
diagonal elements of $\bSig_1$ are greater than the first $\rho$
diagonal elements of $\bSig_2$, and the last $n_t-\rho$ diagonal
elements of $\bSig_1$ are less than or equal to those of
$\bSig_2$. Recall from Lemma~\ref{lin_lem1} that $0\le\rho\le m$,
where $m$ is the number of positive eigenvalues of
$\bH^H\bH-\bG^H\bG$.  Thus,
\begin{equation} \label{lin22}
\bSig_1 = \left[\begin{array}{ccc}\bSig_{1\rho} & \b0\\\b0 &
\bSig_{1\bar{\rho}}\end{array}\right]  \qquad
\bSig_2 = \left[\begin{array}{ccc}\bSig_{2\rho} & \b0\\\b0 & \bSig_{2\bar{\rho}}\end{array}\right]
\end{equation}
where $\bSig_{i\rho}$ is $\rho\times \rho$, $\bSig_{i\bar{\rho}}$ is
$(n_t-\rho)\times (n_t-\rho)$, $\bSig_{1\rho} \succ\bSig_{2\rho}$ and
$\bSig_{1\bar{\rho}} \preceq\bSig_{2\bar{\rho}}$.

Now define
\begin{align}
\bS_{\bw} & =  \bT_{\bw} \bT_{\bw}^H=\bW\bPhi_{\bw}\;\bP \;\bPhi^H_{\bw} \bW
\label{lin24} \\
\bT_{\bw} & =  \bW\bPhi_{\bw}\bP^{\frac{1}{2}} \; ,
\label{lin23}
\end{align}
where $\bW$ and $\bPhi_{\bw}$ are given in Lemma~\ref{lin_lem3} and
$\bP\succeq\b0$ is any block-diagonal matrix partitioned in the same way
as $\bSig_1$ and $\bSig_2$.  With these definitions, we see
from~(\ref{lin21}) that $\bT^H_{\bw}\bH^H\bH\bT_{\bw}$ and
$\bT^H_{\bw}\bG^H\bG\bT_{\bw}$ are block diagonal. Thus, from Theorem
\ref{lin_thm2}, a BC with the matrix power constraint
$\bS_{\bw}=\bT^H_{\bw}\bT_{\bw}$ leads to a matrix pencil
$\left(\bS_{\bw}^{\frac{1}{2}}\bH^H\bH\bS_{\bw}^{\frac{1}{2}}+\bI\;,\;
\bS_{\bw}^{\frac{1}{2}}\bG^H\bG\bS_{\bw}^{\frac{1}{2}}+\bI\right)$
with generalized eigenvectors $\bC_\bw=[\bC_{1\bw} \; \bC_{2\bw}]$
that satisfy $\bC_{1\bw}^H\bC_{2\bw}=\b0$, where $\bC_{1\bw},\bC_{2\bw}$
correspond to generalized eigenvalues that are larger or
less-than-or-equal-to one, respectively.

\begin{rem}\label{lin_remetx2}
Since the above result holds for any block-diagonal $\bP\succeq 0$ with appropriate dimensions, then for every BC there are an infinite number of matrix power constraints $\bS_{\bw}$ that achieve a block diagonalization and hence allow for an \emph{optimal} linear precoding solution.
\end{rem}

In the following, we restrict our attention to diagonal rather than
block-diagonal matrices $\bP$, for which a closed form solution can be
derived. From Theorem~\ref{lin_thm1}, we have the following result.

\begin{lem}\label{lin_lem4}
For any diagonal $\bP \succeq 0$, the secrecy capacity of the broadcast channel in~(\ref{lin1}) under
the matrix power constraint $\bS_{\bw}= \bT_{\bw} \bT_{\bw}^H$ defined
in~(\ref{lin20})-(\ref{lin23}) can be obtained by linear precoding.
In particular,
\begin{align}\label{lin25}
X=\bW\bPhi_{\bw} \left[\begin{array}{ccc} V'_1 \\ V'_2
\end{array}\right] = V_1 + V_2
\end{align}
where $V'_1\in\mathbb{C}^{\rho}$ and $V'_2\in\mathbb{C}^{n_t-\rho}$
are independent Gaussian random vectors with zero
means and covariance matrices $\bP_1$ and $\bP_2$ such that
\begin{align}\label{lin26}
\bP=\left[\begin{array}{ccc}\bP_1 & 0\\0 & \bP_2\end{array}\right] \; ,
\end{align}
and as before $V_1, V_2$ represent independently encoded Gaussian
codebook symbols corresponding to the confidential messages $W_1$ and
$W_2$, with zero means and covariances $\bK_{t\bw}^*$ and
$\bS_{\bw}-\bK_{t\bw}^*$ respectively given by
\begin{align}
\bK_{t\bw}^* & = \bW\bPhi_{\bw} \left[\begin{array}{cc} \bP_1 & \b0 \\ \b0
& \b0 \end{array} \right] \bPhi_{\bw}^H \bW \label{lin25a} \\
\bS_\bw - \bK_{t\bw}^* & = \bW\bPhi_{\bw}\left[ \begin{array}{cc} \b0 & \b0 \\ \b0
& \bP_2 \end{array} \right] \bPhi_{\bw}^H \bW \; . \label{lin25b}
\end{align}
\end{lem}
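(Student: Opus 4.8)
The plan is to obtain this lemma as a direct specialization of Theorems~\ref{lin_thm1} and~\ref{lin_thm2} to the constraint $\bS_{\bw}=\bT_{\bw}\bT_{\bw}^H$, and then to make the optimal covariances explicit. First I would recall that, as established in the paragraph preceding Remark~\ref{lin_remetx2}, the choice $\bT_{\bw}=\bW\bPhi_{\bw}\bP^{\frac{1}{2}}$ with diagonal $\bP=\mathrm{diag}(\bP_1,\bP_2)$ makes $\bT_{\bw}^H\bH^H\bH\bT_{\bw}=\bP^{\frac{1}{2}}\bSig_1\bP^{\frac{1}{2}}$ and $\bT_{\bw}^H\bG^H\bG\bT_{\bw}=\bP^{\frac{1}{2}}\bSig_2\bP^{\frac{1}{2}}$ diagonal, with the first $\rho$ diagonal entries of the former dominating the latter and the ordering reversed on the remaining $n_t-\rho$ entries. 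By Theorem~\ref{lin_thm2} this is exactly the condition $\bC_{1\bw}^H\bC_{2\bw}=\b0$, so Theorem~\ref{lin_thm1} applies verbatim: the rectangular region $\mathcal{C}_s(\bH,\bG,\bS_{\bw})$ is achieved by $X=V_1+V_2$ with independent Gaussian $V_1,V_2$ of covariances $\bK_{t\bw}^*$ and $\bS_{\bw}-\bK_{t\bw}^*$, where $\bK_{t\bw}^*$ is the maximizer~(\ref{lin8}).

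It then remains to verify the closed forms~(\ref{lin25a})--(\ref{lin25b}) and the precoding~(\ref{lin25}). Using the equivalent expression $\bK_{t\bw}^*=\bS_{\bw}^{\frac{1}{2}}\bP_{\bC_{1\bw}}\bS_{\bw}^{\frac{1}{2}}$ derived earlier, I would relate the generalized eigenvectors of the pencil $(\bS_{\bw}^{\frac{1}{2}}\bH^H\bH\bS_{\bw}^{\frac{1}{2}}+\bI,\,\bS_{\bw}^{\frac{1}{2}}\bG^H\bG\bS_{\bw}^{\frac{1}{2}}+\bI)$ to those of the transformed pencil built from $\bT_{\bw}$. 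Writing $\bT_{\bw}=\bS_{\bw}^{\frac{1}{2}}\bU$ for a unitary $\bU$ (both are square roots of $\bS_{\bw}$), the transformed pencil has both components diagonal, so its generalized eigenvectors form a diagonal scaling of the standard basis; pulling this back through $\bU$ and forming the projector shows that $\bP_{\bC_{1\bw}}=\bU\,\mathrm{diag}(\bI_\rho,\b0)\,\bU^H$, i.e.\ the projection onto the first $\rho$ columns of $\bU$. Substituting gives $\bK_{t\bw}^*=\bS_{\bw}^{\frac{1}{2}}\bU\,\mathrm{diag}(\bI_\rho,\b0)\,\bU^H\bS_{\bw}^{\frac{1}{2}}=\bT_{\bw}\,\mathrm{diag}(\bI_\rho,\b0)\,\bT_{\bw}^H$, and since $\bT_{\bw}=\bW\bPhi_{\bw}\bP^{\frac{1}{2}}$ with block-diagonal $\bP$, the relation $\bP^{\frac{1}{2}}\mathrm{diag}(\bI_\rho,\b0)\bP^{\frac{1}{2}}=\mathrm{diag}(\bP_1,\b0)$ yields~(\ref{lin25a}), with the complementary identity~(\ref{lin25b}) following analogously. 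Finally, if $V_1',V_2'$ are independent with covariances $\bP_1,\bP_2$, then $X=\bW\bPhi_{\bw}[V_1';\,V_2']$ has covariance $\bW\bPhi_{\bw}\bP\bPhi_{\bw}^H\bW=\bS_{\bw}$ and splits as $V_1+V_2$ with $V_1=\bW\bPhi_{\bw}[V_1';\,\b0]$ and $V_2=\bW\bPhi_{\bw}[\b0;\,V_2']$ carrying precisely the covariances~(\ref{lin25a})--(\ref{lin25b}), establishing~(\ref{lin25}).

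The main obstacle is the step relating the two pencils: the PSD root $\bS_{\bw}^{\frac{1}{2}}$ and the factor $\bT_{\bw}$ differ by the unitary $\bU$, and one must track this ambiguity carefully to confirm that it cancels in $\bP_{\bC_{1\bw}}$ (and hence in $\bK_{t\bw}^*$), so that the result is independent of which square root is used. A secondary technical point is the degenerate case in which some diagonal entry of $\bP$ vanishes, so that $\bT_{\bw}$ and $\bS_{\bw}$ become singular; then $\bU$ is no longer unique and the associated generalized eigenvalue equals one rather than exceeding it, moving that index from $\bC_{1\bw}$ to $\bC_{2\bw}$. However, because the corresponding entry of $\mathrm{diag}(\bP_1,\b0)$ is itself zero it contributes nothing to~(\ref{lin25a}), so the stated closed forms remain valid; one may either argue directly on the range of $\bS_{\bw}$ or recover the $\bP\succeq\b0$ case from the $\bP\succ\b0$ case by continuity.
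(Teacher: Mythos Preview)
Your proposal is correct and follows essentially the same route as the paper: both invoke Theorems~\ref{lin_thm1} and~\ref{lin_thm2} via the block-diagonalization property of $\bT_{\bw}$, then relate the pencils built from $\bS_{\bw}^{1/2}$ and $\bT_{\bw}$ through a unitary factor (your $\bU$ is the paper's $\bPsi$ from Remark~\ref{lin_remap2}) to reduce the computation of $\bK_{t\bw}^*$ to $\bT_{\bw}\,\mathrm{diag}(\bI_\rho,\b0)\,\bT_{\bw}^H$. The paper makes the diagonal eigenvector matrix $\overline{\bC}_{\bw}=(\bSig_2\bP+\bI)^{-1/2}$ explicit where you argue more abstractly, and your treatment of the degenerate case $\bP\not\succ\b0$ is actually more careful than the paper's.
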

\begin{proof}
The matrix $\bT_{\bw}$ simultaneously block diagonalizes $\bH^H\bH$
and $\bG^H\bG$, so by Theorems~\ref{lin_thm1} and~\ref{lin_thm2} we
know that linear precoding can achieve the secrecy capacity region.
The proof is completed in Appendix F by showing the equality
in~(\ref{lin25}), and showing that~(\ref{lin25a}) corresponds to the optimal
covariance in~(\ref{lin8}).
\end{proof}

From the proof in Appendix F and~(\ref{lin21})-(\ref{lin23}), we see that under the matrix power constraint $\bS_{\bw}$ given by (\ref{lin24}) with diagonal $\bP$, the general BC is transformed to an equivalent BC with a set of parallel independent
subchannels between the transmitter and the receivers, and it suffices
for the transmitter to use independent Gaussian codebooks across these
subchannels.  In particular, the diagonal entries of $\bP_1$ and $\bP_2$ represent the power assigned to these independent subchannels prior to application of the precoder $\bW\bPhi_{\bw}$ in~(\ref{lin23})\footnote{Note that the matrices $\bP_1,\bP_2$ do not represent the actual transmitted power, since the columns of $\bW\bPhi_{\bw}$ are not unit-norm.}.  
From (\ref{lin25}), the signals at the two receivers are given by
\begin{eqnarray*}
\by_1 & = & \bH\bW\bPhi_{\bw} \left[ \begin{array}{c} \bv'_1 \\ \bv'_2 \end{array} \right] + \bz_1 \\
& = & \bGam_1 \bSig_1 \left[ \begin{array}{c} \bv'_1 \\ \bv'_2 \end{array} \right] + \bz_1 \\
& = & \bGam_1 \left[ \begin{array}{c} \bSig_{1\rho} \bv'_1 \\ \bSig_{1\bar{\rho}} \bv'_2 \end{array} \right] + \bz_1 \\
\by_2 & = & \bGam_2 \left[ \begin{array}{c} \bSig_{2\rho}  \bv'_1 \\ \bSig_{2\bar{\rho}}  \bv'_2 \end{array} \right] + \bz_2 \; ,
\end{eqnarray*}
where $\bGam_1,\bGam_2$ are unitary.  
The confidential message for receiver~1 is thus transmitted with power loading $\bP_1$ over those subchannels which are degraded for receiver~2 ($\bSig_{1\rho} \succ \bSig_{2\rho}$), while receiver~2's confidential message has power loading $\bP_2$ over subchannels which are degraded for receiver~1 ($\bSig_{2\bar{\rho}} \succ \bSig_{1\bar{\rho}}$).  Any subchannels for which the diagonal elements of $\bSig_{2\bar{\rho}}$ are equal to those of $\bSig_{1\bar{\rho}}$ are useless from the viewpoint of secret communication, but could be used to send common non-confidential messages.

From Theorem \ref{lin_thm1}, the rectangular secrecy capacity region
of the MIMO Gaussian BC (\ref{lin1}) under the matrix power
constraint $\bS_{\bw}$ (\ref{lin24}) is defined by the corner points
\begin{align}\label{lin27}
\begin{split}
R^*_1(\bP_1)&= \log\left|\mathbf{\Lambda}_{1\bw}\right|= \log\left|\bI+\bSig_{1\rho}\bP_1\right|-\log\left|\bI+\bSig_{2\rho}\bP_1\right|   \\
R^*_2(\bP_2)&= -\log\left|\mathbf{\Lambda}_{2\bw}\right|= \log\left|\bI+\bSig_{2\bar{\rho}}\bP_2\right|-\log\left|\bI+\bSig_{1\bar{\rho}}\bP_2\right| \; ,
\end{split}
\end{align}
where $\mathbf{\Lambda}_{i\bw}$ is given by (\ref{linap35}) in
Appendix~F.  Note that we have explicitly
written $R^*_1$ as a function of the diagonal matrix $\bP_1\succeq\b0$
to emphasize that $\bP_1$ contains the only parameters that can be optimized
for $R^*_1$. More precisely, since for a given matrix power
constraint $\bS_{\bw}$, $\bSig_{1\rho}$ and $\bSig_{2\rho}$ are channel dependent and thus fixed, as shown in (\ref{lin21})-(\ref{lin22}).  A similar
description is also true for $R^*_2$.

\subsection{Algorithm for the MIMO Case Under the Average Power Constraint}\label{sec:mimo}

Here we propose our sub-optimal closed form solution based on linear
precoding for the broadcast channel under the \emph{average} power
constraint (\ref{lin2}). The goal is to find the diagonal matrix $\bP$ in (\ref{lin24}) that maximizes $R_i^*$ in~(\ref{lin27}) for a given allocation of the transmit power to message $W_i$, and that satisfies the average power constraint\footnote{ Note that since we want to characterize the achievable secrecy rate points on the Pareto boundary, we use an equality constraint on the total power $P_t$ in~(\ref{lin28}).}
\begin{align}\label{lin28}
\mathrm{Tr}(E\{XX^H\})&=\mathrm{Tr}(\bS_{\bw})= \mathrm{Tr}(\bW\bPhi_{\bw}\bP\bPhi_{\bw}^H \bW) \nonumber\\
&= \mathrm{Tr}(\bPhi_{\bw}^H \bW^2\bPhi_{\bw}\bP)= \mathrm{Tr}\left(\bPhi_{\bw}^H (\bH^H\bH+\bG^H\bG)^{-1}\bPhi_{\bw}\;\bP\right)= P_t\; .
\end{align}
Noting that $\bPhi_{\bw}$ can be written as $\bPhi_{\bw}=[\bPhi_{1\bw}
\; \bPhi_{2\bw}]$, where $\bPhi_{1\bw}$ is a $n_t\times\rho$
submatrix corresponding to the eigenvalues in $\bSig_{1\rho}$,
(\ref{lin28}) can be rewritten as
\begin{align}\label{lin29}
\mathrm{Tr}(E\{XX^H\})&=\mathrm{Tr}\left(\bPhi_{\bw}^H (\bH^H\bH+\bG^H\bG)^{-1}\bPhi_{\bw}\;\bP\right) \nonumber\\
&= \mathrm{Tr}\left(\bPhi_{1\bw}^H (\bH^H\bH+\bG^H\bG)^{-1}\bPhi_{1\bw}\;\bP_1\right) +\mathrm{Tr}\left(\bPhi_{2\bw}^H (\bH^H\bH+\bG^H\bG)^{-1}\bPhi_{2\bw}\;\bP_2\right)  \nonumber\\
&=  \mathrm{Tr}\left(\bA_1\bP_1\right) +\mathrm{Tr}\left(\bA_2\bP_2\right)=P_t
\end{align}
where we defined positive definite matrices $\bA_i=\bPhi_{i\bw}^H (\bH^H\bH+\bG^H\bG)^{-1}\bPhi_{i\bw}$, $i=1,2$.

Our sub-optimal closed-form solution for the BC under the average
power constraint (\ref{lin2}) is not optimal, since instead of doing
an exhaustive search over all $\bS\succeq\b0$ with
$\mathrm{Tr}(\bS)=P_t$ as indicated in (\ref{lin14}), we will only
consider specific $\bS$ matrices of the form given for $\bS_{\bw}$ in
(\ref{lin24}) with diagonal $\bP$.
Since
$R_i^*(\bP_i)$ is only a function of $\bP_i$, $R^*_1(\bP_1)$ and
$R^*_2(\bP_2)$ can be optimized separately for any power fraction
$\alpha$ ($0\le\alpha\le1$) under the constraints
$\mathrm{Tr}\left(\bA_1\bP_1\right)=\alpha P_t$ and
$\mathrm{Tr}\left(\bA_2\bP_2\right)=(1-\alpha) P_t$, respectively.

\begin{thm}\label{lin_lem5}
For any  $\alpha$, $0\le\alpha\le1$, the diagonal elements of the
optimal $\bP^*_1$ and $\bP^*_2$ are given by
\begin{align}
p^*_{1i} & = \max\left(0,\frac{-(\sigma_{1\rho i}+\sigma_{2\rho i})+\sqrt{(\sigma_{1\rho i}-\sigma_{2\rho i})^2+4(\sigma_{1\rho i}-\sigma_{2\rho i}) \sigma_{1\rho i}\sigma_{2\rho i}/(\mu_1 a_{1i})}}{2\, \sigma_{1\rho i}\sigma_{2\rho i}}\right)
\label{lin30} \\
p^*_{2i} & = \max\left(0,\frac{-(\sigma_{1\bar{\rho} i}+\sigma_{2\bar{\rho} i})+\sqrt{(\sigma_{2\bar{\rho} i}-\sigma_{1\bar{\rho} i})^2+4(\sigma_{2\bar{\rho} i}-\sigma_{1\bar{\rho} i}) \sigma_{2\bar{\rho} i}\sigma_{1\bar{\rho} i}/(\mu_2 a_{2i})}}{2\, \sigma_{1\bar{\rho} i}\sigma_{2\bar{\rho} i}}\right)
\label{lin31} \;,
\end{align}
where $\sigma_{1\rho i}$, $\sigma_{2\rho i}$, and $a_{1i}$ are the
$i^{\rm th}$ diagonal elements of $\bSig_{1\rho}$, $\bSig_{2\rho}$,
and $\bA_1$, respectively, where $0\le i \le\rho$. Also
$\sigma_{1\bar{\rho} i}$, $\sigma_{2\bar{\rho} i}$, and $a_{2i}$ are the
$i^{\rm th}$ diagonal elements of $\bSig_{1\bar{\rho}}$, $\bSig_{2\bar{\rho}}$,
and $\bA_2$, respectively, where $0\le i \le(n_t-\rho)$.
The Lagrange parameters $\mu_1>0$ and $\mu_2>0$ are chosen to
satisfy the average power constraints
$\mathrm{Tr}\left(\bA_1\bP_1\right)=\alpha P_t$ and
$\mathrm{Tr}\left(\bA_2\bP_2\right)=(1-\alpha) P_t$, respectively.
\end{thm}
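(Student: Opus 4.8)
The plan is to treat the two maximizations as independent scalar power-allocation problems and to solve each by a Lagrangian (KKT) argument. Because $\bP$ is restricted to be diagonal, the blocks $\bP_1$ and $\bP_2$ are diagonal, and since $\bSig_{1\rho},\bSig_{2\rho},\bSig_{1\bar{\rho}},\bSig_{2\bar{\rho}}$ are diagonal as well, the objectives in~(\ref{lin27}) decouple across subchannels. For the first user I would write $R_1^*(\bP_1)=\sum_i[\log(1+\sigma_{1\rho i}p_{1i})-\log(1+\sigma_{2\rho i}p_{1i})]$, and since $\bP_1$ is diagonal the constraint in~(\ref{lin29}) collapses to $\sum_i a_{1i}p_{1i}=\alpha P_t$, involving only the diagonal entries $a_{1i}$ of the positive definite $\bA_1$.

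First I would establish concavity so that stationarity is sufficient for global optimality. Each summand $f(p)=\log(1+\sigma_{1\rho i}p)-\log(1+\sigma_{2\rho i}p)$ has derivative $f'(p)=(\sigma_{1\rho i}-\sigma_{2\rho i})/[(1+\sigma_{1\rho i}p)(1+\sigma_{2\rho i}p)]$, which is positive (recall $\sigma_{1\rho i}>\sigma_{2\rho i}$ on these subchannels) and decreasing in $p\ge 0$, so $f$ is concave. Hence the $\bP_1$ problem is the maximization of a concave objective over a nonnegative variable subject to one linear equality, and the KKT conditions are both necessary and sufficient; the same holds for $\bP_2$ after exchanging the roles of $\bSig_{1\bar{\rho}}$ and $\bSig_{2\bar{\rho}}$, where now $\sigma_{2\bar{\rho} i}\ge\sigma_{1\bar{\rho} i}$ plays the role of the stronger gain.

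Next I would form the Lagrangian with multiplier $\mu_1$ for the power constraint, together with nonnegativity multipliers for $p_{1i}\ge 0$. The stationarity condition on an active subchannel reads $\sigma_{1\rho i}/(1+\sigma_{1\rho i}p_{1i})-\sigma_{2\rho i}/(1+\sigma_{2\rho i}p_{1i})=\mu_1 a_{1i}$. The key algebraic step is that combining the two fractions makes the numerator telescope to the constant $\sigma_{1\rho i}-\sigma_{2\rho i}$, so the condition reduces to $(1+\sigma_{1\rho i}p_{1i})(1+\sigma_{2\rho i}p_{1i})=(\sigma_{1\rho i}-\sigma_{2\rho i})/(\mu_1 a_{1i})$, a quadratic in $p_{1i}$. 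Solving it and retaining the larger root yields exactly~(\ref{lin30}); the $\max(0,\cdot)$ arises from complementary slackness, since the larger root is nonnegative precisely when $\mu_1 a_{1i}\le\sigma_{1\rho i}-\sigma_{2\rho i}$ and is negative (forcing $p_{1i}^*=0$) otherwise. The identical computation for $\bP_2$ produces~(\ref{lin31}). Finally $\mu_1,\mu_2>0$ are pinned down by substituting~(\ref{lin30})-(\ref{lin31}) into the two power constraints $\mathrm{Tr}(\bA_1\bP_1)=\alpha P_t$ and $\mathrm{Tr}(\bA_2\bP_2)=(1-\alpha)P_t$, each left-hand side being monotone in its multiplier, which guarantees a unique feasible value.

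The main obstacle is bookkeeping rather than analysis: one must verify the root selection and the sign threshold governing the $\max(0,\cdot)$ operation so that the stated closed form truly is the KKT point, and confirm that the telescoping identity holds uniformly so that no subchannel needs special treatment. Since concavity rules out any gap between local and global optima, beyond these verifications the derivation is routine.
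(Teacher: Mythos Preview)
Your proposal is correct and follows essentially the same approach as the paper: form the Lagrangian for each of the two separable problems, observe that $\bSig_{1\rho}\succ\bSig_{2\rho}$ makes the objective concave, and apply the KKT conditions to obtain the quadratic whose larger root is~(\ref{lin30}). The paper's proof is quite terse---it simply writes down the Lagrangian~(\ref{eq15}), asserts convexity, and states that~(\ref{lin30}) follows from the KKT conditions---whereas you fill in the algebra (the telescoping numerator, the quadratic solution, the root selection and complementary slackness), but the underlying argument is identical.
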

\begin{proof}
We want to optimize diagonal matrices $\bP_1$ and $\bP_2$ so that the secrecy rates $R^*_1(\bP_1)$ and $R^*_2(\bP_2)$, given by (\ref{lin27}), are maximized for a given $\alpha$, $0\le\alpha\le1$. Since $R^*_i(\bP_i)$ only depends on $\bP_i$, the two terms in~(\ref{lin27}) can be maximized independently.  We show the result for $i=1$; the procedure for $i=2$ is identical.  From (\ref{lin27}), the Lagrangian associated with $\mathrm{max}_{\mathrm{Tr}(\bA_1\bP_1)=\alpha P_t}\;R^*_1(\bP_1)$ is
\begin{align}\label{eq15}
\mathcal{L}&=\log\left|\bI+\bSig_{1\rho}\bP_1\right|-\left|\bI+\bSig_{2\rho}\bP_1\right| -\mu_1 \mathrm{Tr}(\bA_1\bP_1)\nonumber\\
&=\sum_{i} [\log(1+\sigma_{1\rho i}p_{1i}) - \log(1+\sigma_{2\rho i}p_{1i})] -\mu_1 \sum_{i} a_{1i} \,p_{1i} \;,
\end{align}
where $\mu_1>0$ is the Lagrange multiplier. Since $\bSig_{1\rho}\succ\bSig_{2\rho}$, Eq. (\ref{eq15}) represents a convex optimization problem. The optimal $\bP_1^*$ with diagonal elements given by (\ref{lin30}) is simply obtained by applying the KKT conditions to
(\ref{eq15}).
\end{proof}

\begin{cor}\label{lin_cor1}
 For any $\alpha$, $0\le\alpha\le1$, let $R^*_1(\alpha)$ and $R^*_2(\alpha)$ represent the corner points given
by (\ref{lin27}) for the optimal $\bP_1^*$ and $\bP_2^*$, given by
(\ref{lin30}) and (\ref{lin31}). The achievable secrecy rate region of
the above approach under the average power constraint (\ref{lin2}) is
the convex hull of all obtained corner points and is given by
\begin{align}\label{lin32}
\mathcal{R}_s(\bH,\bG,P_t)=\bigcup_{0\le\alpha\le1} \left(R^*_1(\alpha)\, , \, R^*_2(\alpha)\right) \;.
\end{align}
\end{cor}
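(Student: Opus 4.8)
The plan is to assemble this corollary directly from Lemma~\ref{lin_lem4}, Theorem~\ref{lin_thm1}, and Theorem~\ref{lin_lem5}, together with a standard time-sharing argument. First I would fix a power split $\alpha\in[0,1]$ and recall that, by Lemma~\ref{lin_lem4} and Theorem~\ref{lin_thm1}, the matrix power constraint $\bS_{\bw}$ built from any diagonal $\bP\succeq\b0$ yields a \emph{rectangular} secrecy capacity region $\mathcal{C}_s(\bH,\bG,\bS_{\bw})$ whose corner point is exactly $(R_1^*(\bP_1),R_2^*(\bP_2))$ of~(\ref{lin27}), and that this corner is attained by linear precoding. Because $R_1^*$ depends only on $\bP_1$ and $R_2^*$ only on $\bP_2$, once the available power $P_t$ is partitioned through the split $\mathrm{Tr}(\bA_1\bP_1)=\alpha P_t$, $\mathrm{Tr}(\bA_2\bP_2)=(1-\alpha)P_t$ --- which by~(\ref{lin29}) is equivalent to $\mathrm{Tr}(\bS_{\bw})=\mathrm{Tr}(\mathbb{E}\{XX^H\})=P_t$, so the average power constraint~(\ref{lin2}) holds with equality --- the two coordinates decouple and can be maximized independently. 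Applying Theorem~\ref{lin_lem5} to each term then gives the diagonal $\bP_1^*,\bP_2^*$ of~(\ref{lin30})--(\ref{lin31}) and hence the best achievable corner $(R_1^*(\alpha),R_2^*(\alpha))$ for that split.

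Second, I would argue that sweeping $\alpha$ over $[0,1]$ exhausts all Pareto-relevant operating points of this restricted (diagonal-$\bP$) family. Since each $\mathcal{C}_s(\bH,\bG,\bS_{\bw})$ is the full rectangle $[0,R_1^*(\alpha)]\times[0,R_2^*(\alpha)]$, every rate pair achievable by this approach with a given split is dominated by the corresponding corner, and by Remark~\ref{lin_remetx1} any point on the Pareto boundary must use $\mathrm{Tr}(\bQ)=P_t$, i.e.\ must correspond to a full split $\alpha$. Thus the set of undominated achievable corner points is precisely $\{(R_1^*(\alpha),R_2^*(\alpha)):0\le\alpha\le1\}$, which traces the outer boundary of the region described by~(\ref{lin32}).

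Finally, I would close the region under time-sharing: given any two achievable rate pairs, corresponding to splits $\alpha$ and $\alpha'$, transmitting with the associated precoders over disjoint fractions of the codeword achieves every convex combination, so the achievable region is convex and equals the convex hull of the corner points. Combining this with the downward-closedness inherited from the rectangular regions $\mathcal{C}_s(\bH,\bG,\bS_{\bw})$ yields the full region whose Pareto boundary is the parametric curve in~(\ref{lin32}).

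The main obstacle --- though a mild one, since this is a corollary of the preceding theorems --- is to make the decoupling-and-sweeping argument airtight: one must confirm that restricting attention to the single scalar split parameter $\alpha$ discards no Pareto-optimal point (guaranteed here because $R_1^*$ and $R_2^*$ separate across the block-diagonal blocks $\bP_1,\bP_2$ and, by Remark~\ref{lin_remetx1}, the total power is always fully used), and that it is precisely the convexity supplied by time-sharing that upgrades the parametric family of corner points to the claimed convex-hull region.
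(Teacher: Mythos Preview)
Your proposal is correct and follows essentially the same logic the paper intends: the corollary is stated in the paper without a separate proof, as an immediate consequence of Lemma~\ref{lin_lem4}, Theorem~\ref{lin_thm1}, and Theorem~\ref{lin_lem5}, together with the standard time-sharing/convex-hull argument you spell out. Your write-up simply makes explicit the decoupling across $\bP_1,\bP_2$, the sweep over $\alpha$, and the convexification step that the paper leaves implicit.
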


It is interesting to note that, unlike the conventional broadcast channel without secrecy constraints where uniform power allocation is optimal in maximizing the sum-rate in the high SNR regime \cite{Jindal}, the high SNR power allocation for the BC with confidential messages is a special form of waterfilling as described in the following lemma.
\begin{lem}\label{lin_lemext1}
For high SNR $(P_t\rightarrow \infty)$, the asymptotic optimal power allocations given by (\ref{lin30})-(\ref{lin31}) are
\begin{align}
p^*_{1i} & = \sqrt{\frac{1}{\mu_1 a_{1i}}\left(\frac{1}{\sigma_{2\rho i}}-\frac{1}{\sigma_{1\rho i}}\right)} \label{linext1} \\
p^*_{2i} & =\sqrt{\frac{1}{\mu_2 a_{2i}} \left(\frac{1}{\sigma_{1\bar{\rho} i}}-\frac{1}{\sigma_{2\bar{\rho} i}}\right)}\label{linext2} \; .
\end{align}
\end{lem}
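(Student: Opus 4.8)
The plan is to obtain (\ref{linext1})--(\ref{linext2}) as the $P_t\to\infty$ limit of the exact expressions (\ref{lin30})--(\ref{lin31}), so the argument is an asymptotic analysis rather than a fresh optimization. I treat $\alpha\in(0,1)$ as fixed. The first step is to establish that the Lagrange multipliers vanish in the limit, i.e. $\mu_1\to 0$ and $\mu_2\to 0$ as $P_t\to\infty$. Since $\bA_1$ is a fixed positive definite matrix and the constraint reads $\mathrm{Tr}(\bA_1\bP_1)=\alpha P_t$, the left-hand side diverges; but for each fixed $\mu_1>0$ the diagonal entries $p^*_{1i}$ in (\ref{lin30}) are bounded, so the only way to make $\mathrm{Tr}(\bA_1\bP_1)$ grow without bound is to send $\mu_1\to 0$. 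The same reasoning applied to the $(1-\alpha)P_t$ constraint gives $\mu_2\to 0$.

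With $1/(\mu_1 a_{1i})\to\infty$, I expand (\ref{lin30}) for each $i$ with $\sigma_{1\rho i}>\sigma_{2\rho i}$ (which holds because $\bSig_{1\rho}\succ\bSig_{2\rho}$). Inside the radical, the term linear in $1/\mu_1$ dominates the constant $(\sigma_{1\rho i}-\sigma_{2\rho i})^2$, so
\begin{align}
\sqrt{(\sigma_{1\rho i}-\sigma_{2\rho i})^2+\frac{4(\sigma_{1\rho i}-\sigma_{2\rho i})\sigma_{1\rho i}\sigma_{2\rho i}}{\mu_1 a_{1i}}} &\sim 2\sqrt{\frac{(\sigma_{1\rho i}-\sigma_{2\rho i})\sigma_{1\rho i}\sigma_{2\rho i}}{\mu_1 a_{1i}}}\,,
\end{align}
and in the numerator the constant $-(\sigma_{1\rho i}+\sigma_{2\rho i})$ becomes negligible relative to this growing radical. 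Dividing by $2\sigma_{1\rho i}\sigma_{2\rho i}$ and using the identity $(\sigma_{1\rho i}-\sigma_{2\rho i})/(\sigma_{1\rho i}\sigma_{2\rho i})=1/\sigma_{2\rho i}-1/\sigma_{1\rho i}$ collapses the expression to (\ref{linext1}). The derivation of (\ref{linext2}) is identical after swapping the roles of the two singular-value sets, using $\bSig_{1\bar{\rho}}\preceq\bSig_{2\bar{\rho}}$ so that $\sigma_{2\bar{\rho} i}>\sigma_{1\bar{\rho} i}$.

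Finally I would note that because $\sigma_{1\rho i}>\sigma_{2\rho i}$, the limiting radicand $1/\sigma_{2\rho i}-1/\sigma_{1\rho i}$ is strictly positive, so the $\max(0,\cdot)$ in (\ref{lin30}) is inactive for every such subchannel; the same holds for (\ref{lin31}). This both justifies dropping the $\max$ operator in the stated limit and supplies the waterfilling interpretation advertised before the lemma, with $1/(\mu_i a_{\cdot i})$ playing the role of a water level scaled by the per-subchannel weights $a_{\cdot i}$.

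The step I expect to be the main obstacle is the first one: rigorously pinning down $\mu_1\to 0$. The multipliers are only defined \emph{implicitly} through the power constraints, so one must argue monotonicity of $\mathrm{Tr}(\bA_1\bP_1)$ in $1/\mu_1$ (each $p^*_{1i}$ is nondecreasing as $\mu_1$ decreases) together with the fact that this trace stays finite for any fixed $\mu_1>0$, so that divergence forces $\mu_1\to 0$. Once the vanishing of the multipliers is in hand, the remaining leading-order expansions are routine, with the dropped constants contributing only lower-order corrections.
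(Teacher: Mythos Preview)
Your proposal is correct and follows essentially the same route as the paper: observe that $\mu_1\to 0$ (and $\mu_2\to 0$) as $P_t\to\infty$, then retain only the dominant term under the radical in (\ref{lin30})--(\ref{lin31}) and simplify. Your write-up is in fact more careful than the paper's, which simply asserts $\mu_1\to 0$ without the monotonicity/boundedness argument you supply and does not comment on the $\max(0,\cdot)$ becoming inactive.
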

\begin{proof}
To show (\ref{linext1}) we note that $\mu_1\rightarrow 0$ when $P_t\rightarrow \infty$. Thus (\ref{lin30}) can be written as
$$p^*_{1i} = \frac{\sqrt{4(\sigma_{1\rho i}-\sigma_{2\rho i}) \sigma_{1\rho i}\sigma_{2\rho i}/(\mu_1 a_{1i})}}{2\, \sigma_{1\rho i}\sigma_{2\rho i}}= \sqrt{\frac{1}{\mu_1 a_{1i}}\;\frac{\sigma_{1\rho i}-\sigma_{2\rho i}}{\sigma_{1\rho i}\sigma_{2\rho i}}}= \sqrt{\frac{1}{\mu_1 a_{1i}}\left(\frac{1}{\sigma_{2\rho i}}-\frac{1}{\sigma_{1\rho i}}\right)}.$$ (\ref{linext2}) is proved similarly.
\end{proof}

It is also worth noting that the solution in~(\ref{lin30})-(\ref{lin31}) approaches the standard point-to-point MIMO waterfilling solution when one of the channels is dominant.  For example, let $\bG\rightarrow 0$.  We will show that the optimal input covariance simplifies to the waterfilling solution for $\bH$, given by $\bQ^*_P=\bPhi_{H}\; (\frac{1}{\mu}\bI-\mathbf{\Lambda}^{-1}_H)^+
\;\bPhi_{H}^H$, where unitary $\bPhi_{H}$ and diagonal $\mathbf{\Lambda}_H$ are obtained from the eigenvalue decomposition $\bH^H\bH=\bPhi_{H}
\mathbf{\Lambda}_H\bPhi^H_{H}$.  The capacity of the point-to-point MIMO Gaussian link is
\begin{align} \label{P2P}
C= \log\left|\bI+\bH\,\bQ^*_P\bH^H\right|=\log\left|\bI + \mathbf{\Lambda}_H \left(\frac{1}{\mu}\bI - \mathbf{\Lambda}_H^{-1}\right)^+\right|.
\end{align}
When $\bG\rightarrow\b0$, we note from (\ref{lin20}) and (\ref{lin21}) that  $\bPhi_{\bw}\rightarrow\bPhi_H$, $\bSig_1=\bSig_{1\rho}\rightarrow\bI$, $\bSig_2\rightarrow\b0$, and $\bP_1=\bP$.  Consequently, $R^*_2\rightarrow0$ and $R^*_1\rightarrow\log\left|\bI+\bP^*\right|$, where $\bP^*$ is a diagonal matrix with diagonal elements given by  (\ref{lin30}).  The average power constraint in (\ref{lin29}) becomes $\mathrm{Tr}\left(\bA\bP\right)=P_t$, where $\bA=\bPhi_{\bw}^H (\bH^H\bH+\bG^H\bG)^{-1}\bPhi_{\bw}\rightarrow \mathbf{\Lambda}^{-1}_H$ when $\bG\rightarrow\b0$. Thus the $i^{\rm th}$ diagonal element of $\bA$ converges to the $i^{\rm th}$ diagonal element of $\mathbf{\Lambda}^{-1}_H$. Starting from~(\ref{lin30}) and applying L'H\^{o}pital's rule, when $\bG\rightarrow 0$ and hence $\sigma_{2\rho i}\rightarrow 0$, we have $\bP^*\rightarrow \mathbf{\Lambda}_H (\frac{1}{\mu}\bI-\mathbf{\Lambda}_H^{-1})^+$, and consequently,
$$\lim_{\bG \rightarrow\b0}
R^*_1=\log\left|\bI+\bP^*\right|= \log\left|\bI + \mathbf{\Lambda}_H \left(\frac{1}{\mu}\bI - \mathbf{\Lambda}_H^{-1}\right)^+\right| \; .
$$

\subsection{Alternative Approach for the MISO Case}

Here we focus on the BC in (\ref{lin1}) for the MISO case under an
average power constraint, where both receivers have a single antenna
and the transmitter has $n_t\ge 2$ antennas:
\begin{align*}
\begin{split}
y_1=\bh^H\bx+ z_1\\
y_2=\bg^H\bx+ z_2 \; ,
\end{split}
\end{align*}
where the channels are represented by the $n_t \times 1$ vectors $\bh$
and $\bg$.  The MISO case is the only BC scenario whose secrecy
capacity region under the \emph{average} power constraint~(\ref{lin2})
is characterized in closed-form. In particular, it was shown in
\cite{Liu09} that
\begin{align}\label{lin33}
\mathcal{C}_s(\bh,\bg,P_t) = \bigcup_{0\le\alpha\le1} \left(C_1(\alpha), C_2(\alpha)\right)
\end{align}
where $\left(C_1(\alpha), C_2(\alpha)\right)$ is the secrecy rate pair
on the Pareto boundary of the secrecy capacity region for the
power fraction $\alpha$, $0\le\alpha\le1$, where power $\alpha P_t$ is
allocated to receiver~1's message and $(1-\alpha)P_t$ is allocated to
receiver~2's message. Furthermore, we have
\cite{Liu09}
\begin{align}\label{lin34}
\begin{split}
C_1(\alpha)&=\log\gamma_1(\alpha)\\
C_2(\alpha)&=\log \gamma_2(\alpha) \; ,
\end{split}
\end{align}
where
$$\gamma_1(\alpha)= \frac{1+\alpha
P_t\;\be_1^H\bh\bh^H\be_1}{1+\alpha P_t\;\be_1^H\bg\bg^H\be_1} \; , $$
$\be_1$ is the unit length principal generalized eigenvector of
$(\bI+P_t\bh\bh^H\;;\;\bI+P_t\bg\bg^H)$, $\gamma_2(\alpha)$
is the largest generalized eigenvalue of
$$(\bI+\frac{(1-\alpha)P_t}{1+\alpha P_t
|\be_1^H\bg|}\bg\bg^H\;;\;\bI+\frac{(1-\alpha)P_t}{1+\alpha P_t
|\be_1^H\bh|}\bh\bh^H) \; ,$$
and $\be_2$ denotes the unit length
generalized eigenvector corresponding to $\gamma_2(\alpha)$.
Note that the achievablity of (\ref{lin34}) is still based on S-DPC.

While we could have just used the results of Section~\ref{sec:mimo}
for the MISO case, we will see that the advantage of considering a
different approach here is that we obtain a more succinct expression for the
achievable secrecy rate region for linear precoding, and we are able
to quantify the loss in secrecy rate incurred by linear precoding
under the average power constraint compared with
$(C_1(\alpha),C_2(\alpha))$.  This was not possible in the MIMO
case.

Referring to (\ref{lin6}), it was shown in \cite{Liu09} that for the
secrecy rate pair given by (\ref{lin34}), $U_1$ and $U_2$ have
covariance matrices $\alpha P_t \be_1\be_1^H$ and $(1-\alpha) P_t
\be_2\be_2^H$, respectively. Thus, the specific input covariance
matrix that attains~(\ref{lin34}) is given by
\begin{align}\label{lin35}
\bS_Q=\alpha P_t \be_1\be_1^H +(1-\alpha) P_t \be_2\be_2^H   \; ,
\end{align}
where $\mathrm{Tr}(\bS_Q)=P_t$ and $\mathrm{rank}(\bS_Q)=2$.
Equivalently, one can say that under the \emph{matrix} power
constraint $\bS_Q$, the corner point of the corresponding rectangular
secrecy capacity region is given by (\ref{lin34}).  The union of these
corner points constructs the Pareto boundary of the secrecy capacity
region under the \emph{average} power constraint, where any point on
the boundary is given by~(\ref{lin34}) for a different $\alpha$ and is
achieved under the matrix power constraint $\bS_Q$ given
by~(\ref{lin35}).

Using the above fact, we now present a different linear precoding
scheme as an alternative to Corollary
\ref{lin_cor1} for the MISO BC under the average power
constraint (\ref{lin2}).
\begin{cor}\label{lin_cor2}
Using the linear precoding scheme proposed in Theorem~\ref{lin_lem2}
for the MISO BC under an average power constraint, the following secrecy
rate region is achievable:
$$\mathcal{R}_s(\bh,\bg,P_t) = \bigcup_{0\le\alpha\le1} \left(R_1(\alpha), R_2(\alpha)\right)\;,$$
where
\begin{align}\label{lin36}
\begin{split}
&R_1(\alpha)=\max(C_1(\alpha)-\log\left(1+ (\bc_2^H\bP^{\perp}_{\bc_1}\bc_2)^{-2}\; |\bc_1^H\bP^{\perp}_{\bc_2}\bP^{\perp}_{\bc_1}\bc_2|^2 \right),\, 0) \\
&R_2(\alpha)= \max(C_2(\alpha)-\log\left(1+ (\bc_2^H\bP^{\perp}_{\bc_1}\bc_2)^{-2}\; |\bc_1^H\bP^{\perp}_{\bc_2}\bP^{\perp}_{\bc_1}\bc_2|^2 \right),\, 0)  \;,
\end{split}
\end{align}
$C_1(\alpha)$ and $C_2(\alpha)$ are given by (\ref{lin34}),
\begin{align}
\bc_1 & =  \frac{1}{\sqrt{\be_1^H(\bS^{-1}_Q+\bg\bg^H)\be_1}}\; \bS_Q^{-\frac{1}{2}}\be_1
\label{lin41} \\
\bc_2 & =  \frac{1}{\sqrt{\mathbf{f}_1^H(\bS^{-1}_Q+\bg\bg^H)\mathbf{f}_1}}\;
\bS_Q^{-\frac{1}{2}}\mathbf{f}_1 \; , \label{lin42}
\end{align}
and where $\mathbf{f}_1$ is the unit length principal generalized
eigenvector of $(\bI+P_t\bg\bg^H\;;\;\bI+P_t\bh\bh^H)$.
\end{cor}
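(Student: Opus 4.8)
The plan is to read Corollary~\ref{lin_cor2} as the MISO specialization of Theorem~\ref{lin_lem2} applied, for each power split $\alpha$, to the rank-two matrix constraint $\bS_Q$ of~(\ref{lin35}). As established in the text, under the matrix constraint $\bS_Q$ the rectangular region $\mathcal{C}_s(\bh,\bg,\bS_Q)$ has corner point exactly $(C_1(\alpha),C_2(\alpha))$ of~(\ref{lin34}), so that $R_1^*=C_1(\alpha)$ and $R_2^*=C_2(\alpha)$ play the role of the S-DPC corner in Theorem~\ref{lin_lem2}. Since $\bS_Q$ generically fails the orthogonality condition $\bc_1^H\bc_2=0$ of Theorem~\ref{lin_thm1}, Theorem~\ref{lin_lem2} yields the achievable pair $R_i=\max(0,\,C_i(\alpha)-\log|\bI+\bN^H\bN|)$ with a common loss term, and taking the union over $0\le\alpha\le1$ produces the asserted region. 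What remains is to evaluate $\log|\bI+\bN^H\bN|$ in closed form and to verify the vectors~(\ref{lin41})--(\ref{lin42}).

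First I would reduce $\bN$ to a scalar. By Lemma~\ref{lin_lem1}, in the MISO case the number $m$ of positive eigenvalues of $\bh\bh^H-\bg\bg^H$ is at most one, so $\bC_1=\bc_1$ is a single column; meanwhile $n_t-2$ of the $n_t-1$ columns of $\bC_2$ correspond to generalized eigenvalue one and lie in the nullspace of $\bS_Q^{\frac{1}{2}}$. Writing $\bB=\bS_Q^{\frac{1}{2}}\bg\bg^H\bS_Q^{\frac{1}{2}}+\bI$ for the second pencil component and using the $\bB$-orthonormality of the generalized eigenvectors (the second relation in~(\ref{lin10})), these nullspace columns are orthogonal, in the ordinary inner product, to both $\bc_1$ and to the remaining range vector $\bc_2$. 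Substituting into $\bN=(\bC_2^H\bP^\perp_{\bc_1}\bC_2)^{-1}\bC_2^H\bP^\perp_{\bc_1}\bP^\perp_{\bC_2}\bc_1$ then shows that the nullspace columns decouple, so $\bN^H\bN$ collapses to the single term $(\bc_2^H\bP^\perp_{\bc_1}\bc_2)^{-2}|\bc_1^H\bP^\perp_{\bc_2}\bP^\perp_{\bc_1}\bc_2|^2$, which is precisely the argument of the logarithm in~(\ref{lin36}).

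Next I would pin down $\bc_1$ and $\bc_2$. For $\bc_1$ I would equate the two expressions for the message-1 covariance: from~(\ref{lin8}) it equals $\bS_Q^{\frac{1}{2}}\bP_{\bc_1}\bS_Q^{\frac{1}{2}}$, while from~\cite{Liu09} it equals $\alpha P_t\be_1\be_1^H$. Since the latter is rank one along $\be_1$, this forces $\bc_1\propto\bS_Q^{-\frac{1}{2}}\be_1$, and imposing the normalization $\bc_1^H\bB\bc_1=1$ from~(\ref{lin10}) yields exactly~(\ref{lin41}). The vector $\bc_2$ is the remaining generalized eigenvector lying in $\mathrm{range}(\bS_Q)$, equivalently the principal generalized eigenvector of the reciprocal pencil $(\bS_Q^{\frac{1}{2}}\bg\bg^H\bS_Q^{\frac{1}{2}}+\bI,\,\bS_Q^{\frac{1}{2}}\bh\bh^H\bS_Q^{\frac{1}{2}}+\bI)$; I would show its direction is $\bS_Q^{-\frac{1}{2}}\mathbf{f}_1$ and normalize as in~(\ref{lin42}).

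The main obstacle is this last identification. Up to scale, $\bc_2$ is determined within the two-dimensional $\mathrm{range}(\bS_Q)$ by $\bB$-orthogonality to $\bc_1$, so it suffices to verify $(\bS_Q^{-\frac{1}{2}}\mathbf{f}_1)^H\bB(\bS_Q^{-\frac{1}{2}}\be_1)=0$; using that $\bg$, $\be_1$, and $\mathbf{f}_1$ all lie in $\mathrm{range}(\bS_Q)=\mathrm{span}\{\bh,\bg\}$, this reduces to the scalar identity $\mathbf{f}_1^H(\bg\bg^H+\bS_Q^{-1})\be_1=0$. The key leverage is that $\mathbf{f}_1$, being the principal generalized eigenvector of the reciprocal pencil, is the \emph{minimal}-eigenvalue generalized eigenvector of the common pencil $(\bI+P_t\bh\bh^H,\,\bI+P_t\bg\bg^H)$ whose principal eigenvector is $\be_1$; hence $\be_1$ and $\mathbf{f}_1$ are orthogonal in both the $\bI+P_t\bh\bh^H$ and the $\bI+P_t\bg\bg^H$ inner products, which gives $\mathbf{f}_1^H\bg\bg^H\be_1=-\tfrac{1}{P_t}\mathbf{f}_1^H\be_1$. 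I would then expand $\bS_Q^{-1}\be_1$ explicitly from the rank-two form $\bS_Q=\alpha P_t\be_1\be_1^H+(1-\alpha)P_t\be_2\be_2^H$ and close the identity, the residual step relying on the explicit \cite{Liu09} characterizations of $\be_2$ and $\mathbf{f}_1$. Once~(\ref{lin41})--(\ref{lin42}) are verified, substituting them into the scalar loss obtained in the second step gives~(\ref{lin36}) and completes the proof.
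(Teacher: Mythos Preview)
Your approach is correct and mirrors the paper's: apply Theorem~\ref{lin_lem2} with $\bS=\bS_Q$ for each $\alpha$, identify $R_i^*=C_i(\alpha)$, obtain $\bc_1$ by matching $\bK_t^*=\bS_Q^{1/2}\bP_{\bc_1}\bS_Q^{1/2}$ against the covariance $\alpha P_t\,\be_1\be_1^H$ from~\cite{Liu09}, and then identify $\bc_2$. The paper streamlines two of your steps. First, it invokes Remark~\ref{lin_remetx1} at the outset to reduce to $n_t=2$ (since $\bh\bh^H+\bg\bg^H$ has rank at most two), so that $\bC_1=\bc_1$ and $\bC_2=\bc_2$ are single vectors and $\bN$ is automatically a scalar; your nullspace-decoupling argument is correct but becomes unnecessary after this reduction. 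Second, for $\bc_2$ the paper simply observes that it is the principal generalized eigenvector of the swapped pencil $\bigl(\bS_Q^{1/2}\bg\bg^H\bS_Q^{1/2}+\bI,\;\bS_Q^{1/2}\bh\bh^H\bS_Q^{1/2}+\bI\bigr)$ and obtains~(\ref{lin42}) by rerunning the $\bc_1$ derivation with the roles of $\bh$ and $\bg$ interchanged, rather than via your explicit $\bB$-orthogonality check $\mathbf{f}_1^H(\bg\bg^H+\bS_Q^{-1})\be_1=0$, which you leave to a residual computation.
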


\begin{proof}
From Remark \ref{lin_remetx1}, and by noting that for any MISO BC,
$\bh\bh^H+\bg\bg^H$ has at most 2 non-zero eigenvalues, any MISO BC
can be modeled with a scenario involving just two transmit antennas.
Thus, without loss of generality, we assume that $n_t=2$. From
Theorem~\ref{lin_lem2}, we only need to characterize $\bc_1$ and
$\bc_2$, where $\bc_1$ ($\bc_2$) is the generalized eigenvector of the
pencil
\begin{align} \label{lin37}
\left(\bS_Q^{\frac{1}{2}}\bh\bh^H\bS_Q^{\frac{1}{2}}+\bI\;,\; \bS_Q^{\frac{1}{2}}\bg\bg^H\bS_Q^{\frac{1}{2}}+\bI\right)
\end{align}
corresponding to the generalized eigenvalue larger (less) than 1,
$\lambda_1$ ($\lambda_2$).

From (\ref{lin6}) and  (\ref{lin8}), the covariance matrix of $U_1$ can be rewritten as
\begin{align} \label{lin38}
\bK_t^* = \bS_Q^{\frac{1}{2}}\;[\bc_1\; \bc_2]\left[
\begin{array}{ccc}
(\bc_1^H\bc_1)^{-1} & 0\\
0 & 0
\end{array}
\right][\bc_1\; \bc_2]^H \bS_Q^{\frac{1}{2}}  = \frac{1}{\bc_1^H\bc_1}\,\bS_Q^{\frac{1}{2}} \bc_1\;\bc_1^H \bS_Q^{\frac{1}{2}} \; .
\end{align}
Comparing (\ref{lin38}) with the covariance matrix of $U_1$ reported
in \cite{Liu09}, we have $\alpha P_t
\be_1\be_1^H=\frac{1}{|\bc_1|^2}\,\bS_Q^{\frac{1}{2}} \bc_1\;\bc_1^H
\bS_Q^{\frac{1}{2}}$. This results in\footnote{Note that multiplication
by a factor $\mathrm{exp}(j\theta)$ is required for a precise equaltiy,
but since this term disappears in the final result, we
simply ignore it.}
\begin{align} \label{lin39}
\frac{\bc_1}{\|\bc_1\|}= \sqrt{\alpha P_t}\;\bS_Q^{-\frac{1}{2}}\be_1 \;.
\end{align}
On the other hand, from the definition of $\bc_1$ and $\bc_2$ (see
(\ref{lin10})-(\ref{lin11}) for example), we have
\begin{align}\label{lin40}
\begin{split}
&[\bc_1\; \bc_2]^H\left[\bS_Q^{\frac{1}{2}}\bh\bh^H\bS_Q^{\frac{1}{2}}+\bI\right][\bc_1\; \bc_2]=\left[\begin{array}{ccc}\lambda_1 & 0\\ 0& \lambda_2\end{array}\right] = \left[\begin{array}{ccc}\gamma_1(\alpha) & 0\\ 0& \gamma^{-1}_2(\alpha)\end{array}\right]\\
&[\bc_1\; \bc_2]^H\left[\bS_Q^{\frac{1}{2}}\bg\bg^H\bS_Q^{\frac{1}{2}}+\bI\right][\bc_1\; \bc_2]=\bI
\end{split}
\end{align}
\begin{figure}[t]
\centering
\includegraphics[width=3in,height=3in]{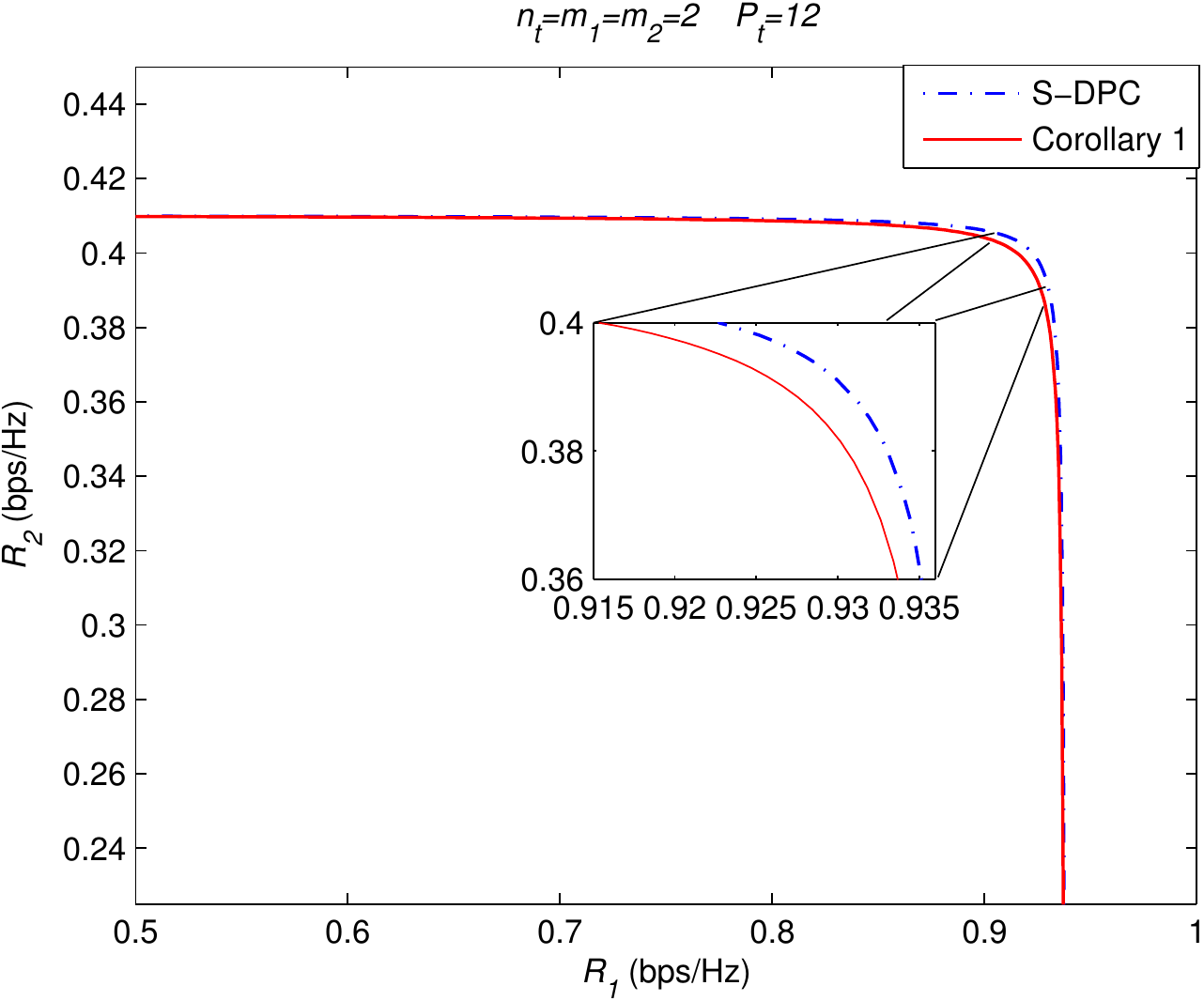}
\caption{Secrecy capacity region of S-DPC together with secrecy rate
region for linear precoding with $P_t=12$, $n_t=m_1=m_2=2$.}
\label{lin_exm1}
\end{figure}
where $\gamma_1(\alpha)$ and $\gamma_2(\alpha)$ are defined after
(\ref{lin34}), and the fact that $\lambda_1=\gamma_1(\alpha)$ and
$\lambda_2=\frac{1}{\gamma_2(\alpha)}$ comes from the argument after
(\ref{lin35}) and by comparing (\ref{lin13}) and (\ref{lin34}).
Substituting (\ref{lin39}) in (\ref{lin40}), after some simple
calculations, $\bc_1$ can be explicitly written as in~(\ref{lin41}).
Recalling that $\bc_1$ is the principal generalized eigenvector
of~(\ref{lin37}) and $\bc_2$, which corresponds to the smallest
generalized eigenvalue of the pencil (\ref{lin37}), is the principal
generalized eigenvector of the pencil
$$\left(\bS_Q^{\frac{1}{2}}\bg\bg^H\bS_Q^{\frac{1}{2}}+\bI\;,\;
\bS_Q^{\frac{1}{2}}\bh\bh^H\bS_Q^{\frac{1}{2}}+\bI\right) \;,$$
we
obtain~(\ref{lin42}).  The proof is completed by using~(\ref{lin41})
and~(\ref{lin42}) in~(\ref{lin19}).
\end{proof}


\section{Numerical Results}  \label{secV}
\begin{figure}[h]
\centering
\includegraphics[width=3in,height=3in]{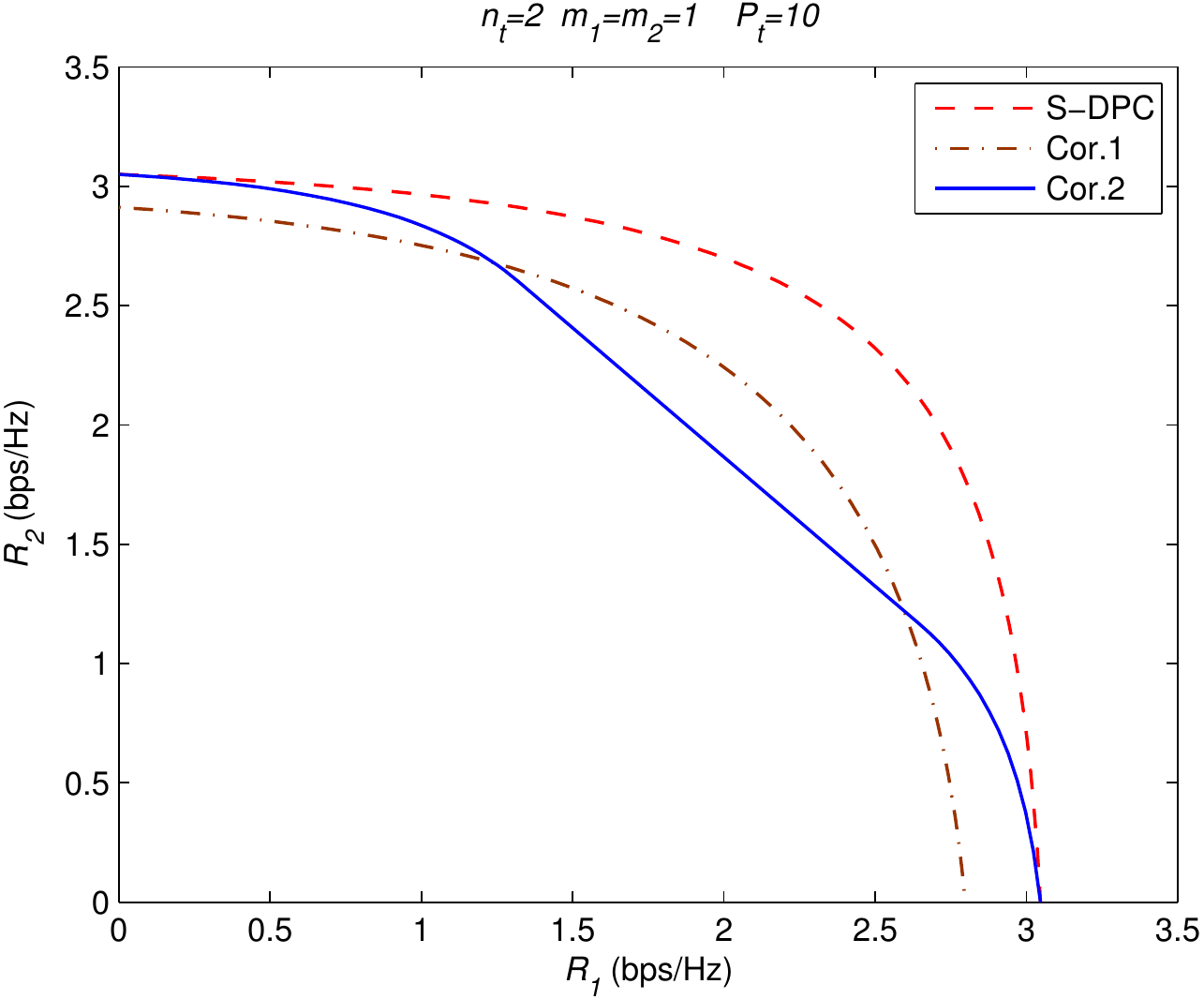}
\caption{Secrecy capacity region of S-DPC together with secrecy rate
region for linear precodings in Cor. 1 and Cor.2, with $P_t=10$, $n_t=2, m_1=m_2=1$.}
\label{lin_exm41}
\end{figure}

In this section, we provide numerical examples to illustrate the
achievable secrecy rate region of the MIMO Gaussian BC under the
average power constraint (\ref{lin2}).  In the first example, we have
$P_t=12$, $\bH=[0.3\; 2.5; 2.2\; 1.8]$ and $\bG=[1.3\; 1.2; 1.5\;
3.9]$, which is identical to the case studied in \cite[Fig. 3
(d)]{LiuLiu10}.  Fig. \ref{lin_exm1} compares the achievable secrecy
rate region of the proposed linear precoding scheme in Section IV-A
with the secrecy capacity region obtained by the optimal S-DPC
approach together with an exhaustive search over suitable matrix
constraints, as described in Section II.  We see that in this example,
the performance of the proposed linear precoding approach is
essentially identical to that of the optimal S-DPC scheme.


In the next example, we study the MISO BC for $P_t=10$.  Fig.~\ref{lin_exm41}
shows the average secrecy rate regions for S-DPC and the suboptimal
linear precoding algorithms described in Corollary~1 and~2.  This plot
is based on an average of over 30000 channel realizations, where the
channel coefficients were generated as independent $\mathcal{CN}(0,1)$
random variables.  We see that Corollary~2 provides near optimal performance
when $\alpha \rightarrow 0$ or $\alpha\rightarrow 1$, while Corollary~1
is better for in-between values of $\alpha$.  The degradation of using
linear precoding with Corollary~1 is never above 15\% for any $\alpha$.

\section{Conclusions}\label{secVI}
We have shown that for a two-user Gaussian BC with an arbitrary number
of antennas at each node, when the channel input is under the matrix power constraint,
linear precoding is optimal and achieves the secrecy
capacity region attained by the optimal S-DPC approach if the matrix
constraint satisfies a specific condition.  We characterized the
form of the linear precoding that achieves the secrecy capacity region in such cases,
and we quantified the maximum loss in secrecy rate that occurs if the matrix
power constraint does not satisfy the given condition.  Based on these
observations, we then formulated a sub-optimal approach for the
general MIMO scenario based on linear precoding for the case of an
average power constraint, for which no known characterization of the
secrecy capacity region exists.  We also studied the MISO case in
detail.  Numerical results indicate that the proposed
linear precoding approaches yield secrecy rate regions that are close to the secrecy capacity achieved by S-DPC.

\appendices

\section{Proof of Eq. (\ref{lin18})}
From (\ref{lin17}), we have
\begin{align}
R_1&=I(V_1;\,\bH(V_1+V_2)+Z_1) -I(V_1;\,\bG V_1+Z_2) \nonumber\\
&=\log\left|\bI+\bS\bH^H\bH\right|-\log\left|\bI+(\bS-\bK^*_t)\bH^H\bH\right|-\log\left|\bI+\bK^*_t\bG^H\bG\right| \;. \label{linap4}
\end{align}
 The covariance $\bK_t^*$, given by (\ref{lin8}), can be rewritten as
\begin{align}\label{linap5}
\bK_t^*&= \bS^{\frac{1}{2}}\left[\bC_1\quad\bC_2\right]
\left[\begin{array}{ccc}(\bC_1^H\bC_1)^{-1} & \b0\\
\b0 & \b0\end{array}\right]\left[\begin{array}{ccc}\bC^H_1\\\bC^H_2\end{array}\right]
\bS^{\frac{1}{2}} \nonumber\\
&=\bS^{\frac{1}{2}}\,\bC_1 (\bC_1^H\bC_1)^{-1} \bC^H_1\, \bS^{\frac{1}{2}} =\bS^{\frac{1}{2}}\,\bP_{\bC_1}\,\bS^{\frac{1}{2}} \; ,
\end{align}
where $\bP_{\bC_1}=\bC_1
(\bC_1^H\bC_1)^{-1} \bC^H_1$ is the
projection matrix onto the column space of $\bC_1$. Moreover, let
$\bP^\perp_{\bC_1}=\bI-\bP_{\bC_1}$ be the
projection onto the space orthogonal to $\bC_1$. Consequently, we have
\begin{align}
\bS-\bK^*_t&=\bS-\bS^{\frac{1}{2}}\,\bP_{\bC_1}\,\bS^{\frac{1}{2}} \nonumber\\
&=\bS^{\frac{1}{2}}\,\bP^\perp_{\bC_1}\,\bS^{\frac{1}{2}} =\bS^{\frac{1}{2}}\,\bP_{\bC_2}\,\bS^{\frac{1}{2}}  \label{linap6}\\
&= \bS^{\frac{1}{2}} \bC\left[\begin{array}{ccc}\b0 & \b0\\
\b0 & (\bC_2^H\bC_2)^{-1}\end{array}\right]\bC^H \bS^{\frac{1}{2}} \; , \label{linap7}
\end{align}
where in (\ref{linap6}), $\bP^\perp_{\bC_1}=\bP_{\bC_2}$ comes from
the fact that $\mathrm{span}\{\bC_1\}\perp\mathrm{span}\{\bC_2\}$, and
$\bC=[\bC_1\quad\bC_2]$ is full-rank.

Following the same steps as in the proof of \cite[Lemma 2]{Weingarten} or \cite[App. B]{LiuLiu10}, we can convert the case when $\bS\succeq\b0$, $|\bS| =0$, to the case where $\bS\succ\b0$ with the same secrecy capacity region.
From (\ref{lin10}) and (\ref{lin11}) we have
\begin{align}\label{linap8}
\begin{split}
&\bH^H\bH= \bS^{-1/2}\left[\bC^{-H}\left[
\begin{array}{ccc}\mathbf{\Lambda}_1 & 0\\
0 & \mathbf{\Lambda}_2\end{array}\right]\bC^{-1}-\bI\right] \bS^{-1/2} \\
&\bG^H\bG= \bS^{-1/2}\left[\bC^{-H}\bC^{-1}-\bI\right] \bS^{-1/2} \; .
\end{split}
\end{align}
Using (\ref{linap7}) and (\ref{linap8}),  we have:
\begin{align}
\left|\bI+(\bS-\bK_t^*)\bH^H\bH\right|&=\left|\bI+\bS^{\frac{1}{2}}\bC \left[
\begin{array}{ccc} \b0 & \b0\\\b0 & (\bC_2^H\bC_2)^{-1}\end{array}\right] \bC^H\cdot\left[\bC^{-H}\left[\begin{array}{ccc}\mathbf{\Lambda}_1 & 0\\
0 & \mathbf{\Lambda}_2\end{array}\right]\bC^{-1}-\bI\right] \bS^{-1/2}\right|\nonumber \\
&=\left|\bI+ \left[\begin{array}{ccc} \b0 & \b0\\\b0 & (\bC_2^H\bC_2)^{-1}\end{array}\right] \cdot\left[\left[\begin{array}{ccc}\mathbf{\Lambda}_1 & 0\\
0 & \mathbf{\Lambda}_2\end{array}\right]-\bC^H\bC\right] \right|   \label{linap9} \\
&=\left|\left[\begin{array}{ccc}\bI & \b0\\\b0 & (\bC_2^H\bC_2)^{-1}\mathbf{\Lambda}_2\end{array}
\right]\right|  \label{linap10} \\
&= \left|(\bC_2^H\bC_2)^{-1}\mathbf{\Lambda}_2\right|
= \left|(\bC_2^H\bC_2)^{-1}\right|\cdot\left|\mathbf{\Lambda}_2\right| \; , \label{linap11}
\end{align}
where (\ref{linap9}) comes from the fact that $\left|\bI+\bA\bB\right|= \left|\bI+\bB\bA\right|$.
Finally, (\ref{linap10}) holds since $\bC_1^H\bC_2=\b0$ and $\bC^H\bC$ is
block diagonal.

Similarly, one can show that
\begin{align} \label{linap13}
\left|\bI+\bK_t^*\bG^H\bG\right|=\left|(\bC_1^H\bC_1)^{-1}\right|
\end{align}
and
\begin{align}
\left|\bI+\bS\bH^H\bH\right|&=\left|\bC^{-H}\left[
\begin{array}{ccc}\mathbf{\Lambda}_1 & 0\\
0 & \mathbf{\Lambda}_2\end{array}\right]\bC^{-1}\right| =\left|(\bC^H\bC)^{-1}\right|\cdot\left|\mathbf{\Lambda}_1\right| \cdot\left|\mathbf{\Lambda}_2\right| \nonumber\\
&=\left|(\bC_1^H\bC_1)^{-1}\right|\cdot\left|(\bC_2^H\bC_2)^{-1}\right|\cdot\left|\mathbf{\Lambda}_1\right| \cdot\left|\mathbf{\Lambda}_2\right| \;.  \label{linap14}
\end{align}

Substituting (\ref{linap11}), (\ref{linap13}) and (\ref{linap14}) in
(\ref{linap4}), we have
$R_1=\log\left|\mathbf{\Lambda}_1\right|=R_1^*$, and
this completes the proof.

\section{Proof of Lemma \ref{lin_lem1}}
From (\ref{lin10})-(\ref{lin11}), we know that $\mathrm{rank}(\bC_1)=b$, where $b$ represents number of generalized eigenvalues of the pencil (\ref{lin9}) that are greater than 1. From (\ref{lin10})-(\ref{lin11}), we have
\begin{align}
&\bC_1^H\left[\bS^{\frac{1}{2}}\bH^H\bH\bS^{\frac{1}{2}}+\bI\right]\bC_1=\mathbf{\Lambda}_1  \label{linap1} \\
&\bC_1^H\left[\bS^{\frac{1}{2}}\bG^H\bG\bS^{\frac{1}{2}}+\bI\right]\bC_1=\bI  \;.\label{linap2}
\end{align}
Subtracting (\ref{linap1}) from (\ref{linap2}), a
straightforward computation yields
\begin{align} \label{linap3}
\bC_1^H\bS^{\frac{1}{2}}\left[\bH^H\bH-\bG^H\bG\right]\bS^{\frac{1}{2}}\bC_1=
\mathbf{\Lambda}_1-\bI  \succ\b0  \;.
\end{align}
From (\ref{linap3}), we have $\bC_1^H\bS^{\frac{1}{2}}\left[\bH^H\bH-\bG^H\bG\right]\bS^{\frac{1}{2}}\bC_1\succ\b0$,
from which it follows that $\mathrm{rank}(\bC_1)=b\le m$. Similarly one can show that  $\mathrm{rank}(\bC_2')\le m'$, where $\bC_2'$ corresponds to the generalized eigenvalues of the pencil (\ref{lin9}) which are less than 1, and $m'$ represents number of negative eigenvalues of $\bH^H\bH-\bG^H\bG$.

\section{Proof of Theorem \ref{lin_thm2}}
We want to characterize the matrices $\bS\succeq\b0$ for which
\begin{align} \label{linap22}
\left(\bS^{\frac{1}{2}}\bH^H\bH\bS^{\frac{1}{2}}+\bI\;,\; \bS^{\frac{1}{2}}\bG^H\bG\bS^{\frac{1}{2}}+\bI\right) 
\end{align}
has generalized eigenvectors with orthogonal $\bC_1$ and $\bC_2$. For
any positive semidefinite matrix $\bS\in \mathbb{C}^{n_t\times n_t}$,
there exists a matrix $\bT\in \mathbb{C}^{n_t \times n_t}$ such that
$\bS=\bT\bT^H$ \cite{Horn}.  More precisely,
$\bT=\bS^{\frac{1}{2}}\bPsi$, where $\bPsi$ can be any $n_t\times n_t$
unitary matrix; thus $\bT$ is not unique.

\begin{rem}\label{lin_remap2}
Let the invertible matrix $\overline{\bC}$ and the diagonal matrix $\overline{\mathbf{\Lambda}}$ respectively represent the generalized eigenvectors and eigenvalues of
\begin{align} \label{linap23}
\left(\bT^H\bH^H\bH\bT+\bI\;,\; \bT^H\bG^H\bG\bT+\bI\right) \; ,
\end{align}
so that
\begin{align}\label{linap_ex1}
\begin{split}
&\overline{\bC}^H\left[\bT^H\bH^H\bH\bT+\bI\right]\overline{\bC}=\overline{\mathbf{\Lambda}}\\
&\overline{\bC}^H\left[\bT^H\bG^H\bG\bT+\bI\right]\overline{\bC}=\bI \; ,
\end{split}
\end{align}
where $\bT=\bS^{\frac{1}{2}}\bPsi$ for a given unitary matrix
$\bPsi$. By comparing (\ref{lin10}) and (\ref{linap_ex1}), one can
confirm that $\bPsi\overline{\bC} =\bC$ and
$\overline{\mathbf{\Lambda}} = \mathbf{\Lambda}$, where $\bC$ and
$\mathbf{\Lambda}$ are respectively the generalized eigenvectors
and eigenvalues of~(\ref{linap22}), as
given by (\ref{lin10}).
\end{rem}

Also note that, for any unitary $\bPsi$,
$\overline{\bC}^H\overline{\bC} =\bC^H\bC$. Thus, finding a
$\bS\succeq\b0$ such that~(\ref{linap22}) has orthogonal
$\bC_1$ and $\bC_2$ (block diagonal $\bC^H\bC$) is equivalent to
finding a $\bT$, $\bS=\bT\bT^H$, such that~(\ref{linap23}) has
orthogonal $\overline{\bC}_1$ and $\overline{\bC}_2$ (block diagonal
$\overline{\bC}^H\overline{\bC}$).

The \emph{if} part of Theorem \ref{lin_thm2} is easy to show. We want to show that if
$\bS=\bT\bT^H$ and $\bT$ simultaneously block diagonalizes $\bH^H\bH$
and $\bG^H\bG$, as given by (\ref{linshrink1})
such that $\bK_{\bH1}\succeq\bK_{\bG1}$ and $\bK_{\bH2}\preceq\bK_{\bG2}$,
then $\bC_1^H\bC_2=\b0$. From the definition of the generalized eigenvalue decomposition, we have
\begin{align}\label{linap24}
\begin{split}
&\overline{\bC}^H\left[\bT^H\bH^H\bH\bT+\bI\right]\overline{\bC}= \overline{\bC}^H \left[\begin{array}{ccc}\bI+\bK_{\bH1} & \b0\\
\b0 & \bI+\bK_{\bH2}\end{array}\right]\overline{\bC}= \left[\begin{array}{ccc}\bD_1 & \b0\\
\b0 & \bD_2 \end{array}\right]\\
&\overline{\bC}^H\left[\bT^H\bG^H\bG\bT+\bI\right]\overline{\bC}=\overline{\bC}^H \left[\begin{array}{ccc}\bI+\bK_{\bG1} & \b0\\
\b0 & \bI+\bK_{\bG2}\end{array}\right]\overline{\bC}= \left[\begin{array}{ccc}\bI & \b0\\
\b0 & \bI \end{array}\right] \; , \\
\end{split}
\end{align}
from which we have
$$\overline{\bC}=
\left[\overline{\bC}_1\quad\overline{\bC}_2\right]=\left[\begin{array}{ccc}\overline{\bC}_{11}
& \b0\\ \b0 & \overline{\bC}_{22}\end{array}\right] \;,$$
where the invertible matrix $\overline{\bC}_{11}$ and diagonal matrix
$\bD_1$ are respectively the generalized eigenvectors and eigenvalues of
$\left(\bI+\bK_{\bH1}\,;\,\bI+\bK_{\bG1}\right) $. Since $\bK_{\bH1}\succeq \bK_{\bG1}$,
then $\bD_1\succeq\bI$, which shows that $\overline{\bC}_1$ corresponds to generalized eigenvalues that are bigger than or equal to one. We have a similar
definition for $\overline{\bC}_{22}$ and diagonal matrix $\bD_2$,
corresponding to
$\left(\bI+\bK_{\bH2}\,;\,\bI+\bK_{\bG2}\right) $.
Finally, since $\overline{\bC}^H\overline{\bC}$ is
block diagonal, then $\bC^H\bC$, where $\bC$ is the generalized
eigenvector matrix of (\ref{linap22}), is block diagonal as well. This
completes the \emph{if} part of the theorem.

In the following, we prove the \emph{only if} part of Theorem \ref{lin_thm2};
{\em i.e.,} we show that if $\bS\succeq\b0$ results in (\ref{linap22}) having
orthogonal $\bC_1$ and $\bC_2$, then there must exist a square matrix
$\bT$ such that $\bS=\bT\bT^H$ and $\bT^H\bH^H\bH\bT $ and
$\bT^H\bG^H\bG\bT$ are simultaneously block diagonalized as in~(\ref{linshrink1})
with $\bK_{\bH1}\succeq\bK_{\bG1}$ and $\bK_{\bH2}\preceq\bK_{\bG2}$.

Let $\bS^{\frac{1}{2}}\bG^H\bG\bS^{\frac{1}{2}}$ have the eigenvalue
decomposition $\bPhi_B\bSig_B\bPhi_B^H$, where $\bPhi_B$ is unitary
and $\bSig_B$ is a positive semidefinite diagonal matrix. Also let
$(\bI+\bSig_B)^{-\frac{1}{2}}\bPhi_B^H\left(\bI+\bS^{\frac{1}{2}}\bH^H\bH\bS^{\frac{1}{2}}\right)\bPhi_B(\bI+\bSig_B)^{-\frac{1}{2}}$
have the eigenvalue decomposition $\bPhi_A\bSig_A\bPhi_A^H$, where
$\bPhi_A$ is unitary and $\bSig_A$ is a positive definite diagonal
matrix. One can easily confirm that \cite{Horn}
$\bC=\bPhi_B(\bI+\bSig_B)^{-\frac{1}{2}}\bPhi_A$
and $\mathbf{\Lambda}=\bSig_A$, where $\bC$ and $\mathbf{\Lambda}$ are
respectively the generalized eigenvectors and eigenvalues of~(\ref{linap22}).  Also let $\bC$ be ordered such that $\bC=[\bC_1\quad\bC_2]$, where $\bC_1$ corresponds to the generalized eigenvalues bigger than
(or equal to) 1. We have
\begin{align}\label{linap25}
\bC^H\bC= \bPhi_A ^H(\bI+\bSig_B)^{-1}\bPhi_A  \;.
\end{align}

From (\ref{linap25}), $\bC^H\bC$ is block diagonal iff the unitary matrix
$\bPhi_A$ is block diagonal. Recalling that $\bPhi_A$ is the
eigenvector matrix of
$(\bI+\bSig_B)^{-\frac{1}{2}}\bPhi_B^H\left(\bI+\bS^{\frac{1}{2}}\bH^H\bH\bS^{\frac{1}{2}}\right)\bPhi_B(\bI+\bSig_B)^{-\frac{1}{2}}$,
a block diagonal $\bPhi_A$ leads to
$(\bI+\bSig_B)^{-\frac{1}{2}}\bPhi_B^H\left(\bI+\bS^{\frac{1}{2}}\bH^H\bH\bS^{\frac{1}{2}}\right)\bPhi_B(\bI+\bSig_B)^{-\frac{1}{2}}$,
and consequently
$\bPhi_B^H\bS^{\frac{1}{2}}\bH^H\bH\bS^{\frac{1}{2}}\bPhi_B$ must be
block diagonal. Thus, if $\bC^H\bC$ is block diagonal,
{\em i.e.,} $\bC_1^H\bC_2=\b0$, there must exist a unitary matrix $\bPhi_B$
such that $\bPhi_B^H\bS^{\frac{1}{2}}\bH^H\bH\bS^{\frac{1}{2}}\bPhi_B$
and $\bPhi_B^H\bS^{\frac{1}{2}}\bG^H\bG\bS^{\frac{1}{2}}\bPhi_B$ are
simultaneously block diagonal.\footnote{Note that
$\bPhi_B^H\bS^{\frac{1}{2}}\bG^H\bG\bS^{\frac{1}{2}}\bPhi_B$ is
actually diagonal, and hence also block diagonal.} Letting $\bT=\bS^{\frac{1}{2}}\bPhi_B$ results in~(\ref{linap24}), for which we must have $\bK_{\bH1}\succeq\bK_{\bG1}$ and $\bK_{\bH2}\preceq\bK_{\bG2}$, otherwise it contradicts the ordering of $\bC=[\bC_1\quad\bC_2]$. This completes the proof.

\section{Proof of Theorem \ref{lin_lem2}}
We need to prove that the secrecy rate pair $(R_1,R_2)$ given by (\ref{lin19}) is achievable.

\begin{rem}\label{lin_remap1} By applying the Schur Complement Lemma \cite{Horn}
on
$$\bC^H\bC=\left[\bC_1 \quad \bC_2\right]^H\left[\bC_1\quad \bC_2\right]=\left[
\begin{array}{ccc}\bC_1^H\bC_1 & \bC_1^H\bC_2\\
\bC_2^H\bC_1 & \bC_2^H\bC_2\end{array}\right]$$
and recalling the fact that $\bC$ is full-rank, we have that
$\bC_2^H\bC_2-\bC_2^H\bC_1(\bC_1^H\bC_1)^{-1}\bC_1=\bC_2^H\bP^\perp_{\bC_1}\bC_2$
is full rank. Similarly, one can show that
$\left(\bC_1^H\bP^\perp_{\bC_2}\bC_1\right)^{-1}$ exists. Also, we have
$\left|\bC^H\bC\right|=
\left|\bC_1^H\bP^\perp_{\bC_2}\bC_1\right|\cdot\left|\bC_2^H\bC_2\right|$
$=\left|\bC_2^H\bP^\perp_{\bC_1}\bC_2\right|\cdot\left|\bC_1^H\bC_1\right|$.
\end{rem}

Define $\widehat{\bC}=[\bP^\perp_{\bC_2}\bC_1 \quad \bC_2]$, so that
$$\widehat{\bC}^H\widehat{\bC}=\left[\bP^\perp_{\bC_2}\bC_1 \quad \bC_2\right]^H\left[\bP^\perp_{\bC_2}\bC_1\quad \bC_2\right]=\left[
\begin{array}{ccc}\bC_1^H\bP^\perp_{\bC_2}\bC_1 & \b0\\
\b0 & \bC_2^H\bC_2\end{array}\right]\;.$$
Consequently, we can write
\begin{align}\label{linap15}
\bP_{\bC_2}&= \bC_2(\bC_2^H\bC_2)^{-1}\bC_2^H
=\widehat{\bC}\left[\begin{array}{ccc}\b0 & \b0\\
\b0 & (\bC_2^H\bC_2)^{-1}\end{array}\right]\widehat{\bC}^H
\end{align}
and
\begin{align}\label{linap16}
\bP^\perp_{\bC_2}&= \bI-\bP_{\bC_2} = \widehat{\bC} \left(\widehat{\bC}^H\widehat{\bC}\right)^{-1}\widehat{\bC}^H -\bP_{\bC_2}  \nonumber\\
&=\widehat{\bC}\left[\begin{array}{ccc}\left(\bC_1^H\bP^\perp_{\bC_2}\bC_1\right)^{-1} & \b0\\
\b0 & \b0\end{array}\right]\widehat{\bC}^H \; .
\end{align}

In the following we show the achievablity of $R_1$ in  (\ref{lin19}). The achievablity of $R_2$ is obtained in a similar manner.
Since $V_1$ and $V_2$ in Theorem~\ref{lin_lem2} are independent, from (\ref{lin17}) we have
\begin{align}\label{linap17}
R_1&=I(V_1;\,\bH(V_1+V_2)+Z_1) -I(V_1;\,\bG V_1+Z_2) \nonumber\\
&=\log\left|\bI+\bH\bS\bH^H\right|-\log\left|\bI+\bH(\bS^{\frac{1}{2}}\bP_{\bC_2}\bS^{\frac{1}{2}})\bH^H\right|-\log\left|\bI+\bG(\bS^{\frac{1}{2}}\bP^\perp_{\bC_2}\bS^{\frac{1}{2}})\bG^H\right| \nonumber\\
&=\log\left|\bI+\bS\bH^H\bH\right|-\log\left|\bI+\bS^{\frac{1}{2}}\bP_{\bC_2}\bS^{\frac{1}{2}}\bH^H\bH\right|-\log\left|\bI+\bS^{\frac{1}{2}}\bP^\perp_{\bC_2}\bS^{\frac{1}{2}}\bG^H\bG\right|  \;.
\end{align}
Recalling (\ref{linap8}), we have
\begin{align}
\left|\bI+\bS\bH^H\bH\right|&=\left|(\bC^H\bC)^{-1}\right|\cdot\left|\mathbf{\Lambda}_1\right| \cdot\left|\mathbf{\Lambda}_2\right| \nonumber\\
&=\left|(\bC_1^H\bP^\perp_{\bC_2}\bC_1)^{-1}\right|\cdot\left|(\bC_2^H\bC_2)^{-1}\right|\cdot\left|\mathbf{\Lambda}_1\right| \cdot\left|\mathbf{\Lambda}_2\right| \; , \label{linap18}
\end{align}
where we used Remark \ref{lin_remap1} to obtain (\ref{linap18}). From (\ref{linap15}), we have
\begin{align}
&\log\left|\bI+\bS^{\frac{1}{2}}\bP_{\bC_2}\bS^{\frac{1}{2}}\bH^H\bH\right|\nonumber\\
&=\left|\bI+\widehat{\bC} \left[
\begin{array}{ccc} \b0 & \b0\\\b0 & (\bC_2^H\bC_2)^{-1}\end{array}\right] \widehat{\bC}^H\cdot\left[\bC^{-H}\left[\begin{array}{ccc}\mathbf{\Lambda}_1 & 0\\
0 & \mathbf{\Lambda}_2\end{array}\right]\bC^{-1}-\bI\right] \right|\nonumber \\
&=\left|\bI+\left[\begin{array}{ccc} \b0 & \b0\\\b0 & (\bC_2^H\bC_2)^{-1}\end{array}\right] \cdot\left[\widehat{\bC}^H\bC^{-H}\left[\begin{array}{ccc}\mathbf{\Lambda}_1 & 0\\
0 & \mathbf{\Lambda}_2\end{array}\right]\bC^{-1}\widehat{\bC}-\widehat{\bC}^H\widehat{\bC} \right] \right|\nonumber \\
&=\left|\bI+\left[\begin{array}{ccc} \b0 & \b0\\\b0 & (\bC_2^H\bC_2)^{-1}\end{array}\right] \cdot\left[\left[\begin{array}{ccc}\bI & \bN^H\\
\b0 & \bI\end{array}\right] \left[\begin{array}{ccc}\mathbf{\Lambda}_1 & 0\\
0 & \mathbf{\Lambda}_2\end{array}\right]\left[\begin{array}{ccc}\bI & \b0\\
\bN & \bI\end{array}\right] -\widehat{\bC}^H\widehat{\bC} \right] \right|       \label{linap19}\\
&=\left|\bI+\left[\begin{array}{ccc} \b0 & \b0\\\b0 & (\bC_2^H\bC_2)^{-1}\end{array}\right] \cdot\left[\left[\begin{array}{ccc}\mathbf{\Lambda}_1+\bN^H\mathbf{\Lambda}_2\bN & \bN^H\mathbf{\Lambda}_2\\
\mathbf{\Lambda}_2\bN & \mathbf{\Lambda}_2\end{array}\right] -\widehat{\bC}^H\widehat{\bC} \right] \right|  \nonumber\\
&=\left|\left[\begin{array}{ccc} \bI & \b0 \\  (\bC_2^H\bC_2)^{-1}\mathbf{\Lambda}_2\bN & (\bC_2^H\bC_2)^{-1}\mathbf{\Lambda}_2\end{array}\right]  \right| \nonumber\\
&= \left|(\bC_2^H\bC_2)^{-1}\mathbf{\Lambda}_2\right| =\left|(\bC_2^H\bC_2)^{-1}\right|\cdot\left|\mathbf{\Lambda}_2\right| \; ,   \label{linap20}
\end{align}
where in (\ref{linap19}), $\bN=\left(\bC_2^H\bP^\perp_{\bC_1}\bC_2\right)^{-1}\bC_2^H\bP^\perp_{\bC_1}\bP^\perp_{\bC_2}\bC_1$, and we used the fact that
\begin{align*}
\bC^{-1}\widehat{\bC}&=\left[\begin{array}{ccc}(\bC_1^H\bP^\perp_{\bC_2}\bC_1)^{-1} \bC_1^H\bP^\perp_{\bC_2}\\
(\bC_2^H\bP^\perp_{\bC_1}\bC_2)^{-1} \bC_2^H\bP^\perp_{\bC_1} \end{array}\right]\, \left[\bP^\perp_{\bC_2}\bC_1 \quad \bC_2\right]
= \left[\begin{array}{ccc}\bI & \b0\\ \bN & \bI\end{array}\right]  \;.
\end{align*}

Similarly, we have
\begin{align}
&\log\left|\bI+\bS^{\frac{1}{2}}\bP^\perp_{\bC_2}\bS^{\frac{1}{2}}\bG^H\bG\right|\nonumber\\
&=\left|\bI+\left[\begin{array}{ccc} (\bC_1^H\bP^\perp_{\bC_2}\bC_1)^{-1} & \b0\\\b0 & \b0\end{array}\right] \cdot\left[\left[\begin{array}{ccc}\bI & \bN^H\\
\b0 & \bI\end{array}\right] \left[\begin{array}{ccc}\bI & \b0\\
\bN & \bI\end{array}\right] -\widehat{\bC}^H\widehat{\bC} \right] \right|  \nonumber\\
&= \left|\bI+\left[\begin{array}{ccc} (\bC_1^H\bP^\perp_{\bC_2}\bC_1)^{-1} & \b0\\\b0 & \b0\end{array}\right] \cdot\left[\left[\begin{array}{ccc}\bI+\bN^H\bN & \bN^H\\ \bN & \bI\end{array}\right] -\widehat{\bC}^H\widehat{\bC} \right] \right|  \nonumber\\
&= \left|\left[\begin{array}{ccc} (\bC_1^H\bP^\perp_{\bC_2}\bC_1)^{-1} (\bI+\bN^H\bN) &  (\bC_1^H\bP^\perp_{\bC_2}\bC_1)^{-1}\bN^H \\ \b0 & \bI\end{array}\right] \right|  \nonumber\\
&= \left|(\bC_1^H\bP^\perp_{\bC_2}\bC_1)^{-1}\right|\cdot\left|\bI+\bN^H\bN\right|    \;. \label{linap21}
\end{align}
Subsituting (\ref{linap18}), (\ref{linap20}) and (\ref{linap21}) in (\ref{linap17}), we have
$R_1= \max(0, \,\log\left|\mathbf{\Lambda}_1\right|-\log\left|\bI+\bN^H\bN\right|)$, which completes the proof.

\section{Proof of Lemma \ref{lin_lem3}}
We want to show that $\bW\bH^H\bH\bW$ and $\bW\bG^H\bG\bW$ commute, where $\bW=(\bH^H\bH+\bG^H\bG)^{-\frac{1}{2}}$
Let the invertible matrix $\widehat{\bC}$ and diagonal matrix
$\widehat{\mathbf{\Lambda}}$ respectively represent the generalized
eigenvectors and eigenvalues of
$\left(\bW\bH^H\bH\bW+\bI\;;\; \bW\bG^H\bG\bW+\bI\right)$, so that
\begin{align}
&\widehat{\bC}^H\left[\bW\bH^H\bH\bW+\bI\right]\widehat{\bC}=\widehat{\mathbf{\Lambda}}  \label{linap26}\\
&\widehat{\bC}^H\left[\bW\bG^H\bG\bW+\bI\right]\widehat{\bC}=\bI  \;. \label{linap27}
\end{align}
Adding (\ref{linap26}) and (\ref{linap27}), we have
\begin{align*}
\widehat{\bC}^H\left[\bW(\bH^H\bH+\bG^H\bG)\bW+2\,\bI\right]\widehat{\bC}=
3\,\widehat{\bC}^H\widehat{\bC}=(\widehat{\mathbf{\Lambda}}+\bI) \;,
\end{align*}
from which it results that $\widehat{\bC}$ must be of the form \cite{Horn}
\begin{align}\label{linap28}
\widehat{\bC}=\frac{1}{\sqrt{3}}\,\bPhi_{\bw}(\widehat{\mathbf{\Lambda}}+\bI)^{\frac{1}{2}} \;,
\end{align}
where $\bPhi_{\bw}$ is an unknown unitary matrix. In the following, as we continue the proof, $\bPhi_{\bw}$ is characterized too.

Substituting (\ref{linap28}) in (\ref{linap26}) and (\ref{linap27}),
it is revealed that the unitary matrix $\bPhi_{\bw}$ represents the
common set of eigenvectors for the matrices $\bW\bH^H\bH\bW+\bI$ and
$\bW\bG^H\bG\bW+\bI$, and thus both matrices commute.  In particular,
\begin{align*}
\begin{split}
\bPhi_{\bw}^H\left[\bW\bH^H\bH\bW+\bI\right]\bPhi_{\bw} &=3\,\widehat{\mathbf{\Lambda}} (\widehat{\mathbf{\Lambda}}+\bI)^{-1}=3\,(\widehat{\mathbf{\Lambda}}^{-1}+\bI)^{-1} \\
\bPhi_{\bw}^H\left[\bW\bG^H\bG\bW+\bI\right]\bPhi_{\bw}&=3\, (\widehat{\mathbf{\Lambda}}+\bI)^{-1}\;.
\end{split}
\end{align*}
Consequently, $\bSig_1$ and $\bSig_2$ are diagonal:
\begin{align}\label{linap29}
\begin{split}
\bPhi_{\bw}^H\;\bW\bH^H\bH\bW\;\bPhi_{\bw} &=3\,(\widehat{\mathbf{\Lambda}}^{-1}+\bI)^{-1} -\bI =\bSig_1\\
\bPhi_{\bw}^H\,\bW\bG^H\bG\bW\,\bPhi_{\bw}&=3\, (\widehat{\mathbf{\Lambda}}+\bI)^{-1}-\bI=\bSig_2\;.
\end{split}
\end{align}
It is interesting to note that, since $\bPhi_{\bw}^H\,\bW\bH^H\bH\bW\,\bPhi_{\bw}\succeq\b0$ and $\bPhi_{\bw}^H\,\bW\bG^H\bG\bW\,\bPhi_{\bw}\succeq\b0$, we have
$\frac{1}{2}\bI\preceq\widehat{\mathbf{\Lambda}}\preceq2\,\bI$.

\section{Proof of Lemma \ref{lin_lem4}}
We first consider the generalized eigenvalue decomposition for
\begin{align} \label{linap31}
\left(\bT^H_{\bw}\bH^H\bH\bT_{\bw}+\bI\;,\; \bT^H_{\bw}\bG^H\bG\bT_{\bw}+\bI\right)  \;,
\end{align}
where $\bT_{\bw}$ is given by (\ref{lin23}) and
\begin{align*}
\begin{split}
&\overline{\bC}_{\bw}^H\left[\bT_{\bw}^H\bH^H\bH\bT_{\bw}+\bI\right]\overline{\bC}_{\bw}=\mathbf{\Lambda}_{\bw}\\
&\overline{\bC}_{\bw}^H\left[\bT_{\bw}^H\bG^H\bG\bT_{\bw}+\bI\right]\overline{\bC}_{\bw}=\bI \; .
\end{split}
\end{align*}
Using (\ref{lin21}), and noting that $\bPhi_{\bw}$ is unitary and $\bP$ is diagonal, a straightforward calculation yields
\begin{align*}
\begin{split}
&\overline{\bC}_{\bw}^H\left[\bSig_1\bP+\bI\right]\overline{\bC}_{\bw}=\mathbf{\Lambda}_{\bw}\\
&\overline{\bC}_{\bw}^H\left[\bSig_2\bP+\bI\right]\overline{\bC}_{\bw}=\bI\;,
\end{split}
\end{align*}
where $\bSig_1$ and $\bSig_2$ are respectively (diagonal) eigenvalue matrices of $\bW\bH^H\bH\bW$ and $\bW\bG^H\bG\bW$, as given by (\ref{lin21}).
Thus, $\overline{\bC}_{\bw}$ is diagonal and is given by
\begin{align}\label{linap32}
\overline{\bC}_{\bw}&=\left(\bSig_2\bP+\bI\right)^{-\frac{1}{2}}  \;.
\end{align}
Consequently, we have
$\mathbf{\Lambda}_{\bw}= \left(\bSig_2\bP+\bI\right)^{-1}   \left(\bSig_1\bP+\bI\right)$.

Let $\sigma_{1i}$, $\sigma_{2i}$ and $p_i$ represent the $i^{\rm th}$ diagonal elements of $\bSig_1$, $\bSig_2$ and $\bP$, respectively.   We note that for any $p_i$, $(1+\sigma_{1i}\,p_i)/(1+\sigma_{2i}\,p_i)>1$ iff $\sigma_{1i}>\sigma_{2i}$. Thus, based on the argument that we made after Lemma \ref{lin_lem3}, the first $\rho$ diagonal elements of $\mathbf{\Lambda}_{\bw}$ represent generalized eigenvalues greater than 1. Letting
\begin{align} \label{linap34}
\bP=\left[\begin{array}{ccc}\bP_1 & \b0\\\b0 & \bP_2\end{array}\right]
\end{align}
where $\bP_1$ is $\rho\times\rho$ and $\bP_2$ is $(n_t-\rho)\times(n_t-\rho)$, we have:
\begin{align} \label{linap35}
\mathbf{\Lambda}_{\bw}&= \left[\begin{array}{ccc}\left(\bSig_{2\rho}\bP_1+\bI\right)^{-1}   \left(\bSig_{1\rho}\bP_1+\bI\right) & \b0\\\b0 & \left(\bSig_{2\bar{\rho}}\bP_2+\bI\right)^{-1}   \left(\bSig_{1\bar{\rho}}\bP_2+\bI\right)\end{array}
\right]  = \left[\begin{array}{ccc}\mathbf{\Lambda}_{1\bw} & \b0\\\b0 & \mathbf{\Lambda}_{2\bw}\end{array}
\right] \; ,
\end{align}
where $\bSig_{i\rho}$ and $\bSig_{i\bar{\rho}}$ ($i=1,2$) are given by (\ref{lin22}).  Consequently, (\ref{linap32}) can be rewritten as
\begin{align} \label{linap36}
\overline{\bC}_{\bw}=[\overline{\bC}_{1\bw}\quad\overline{\bC}_{2\bw}]= \left[\begin{array}{ccc}\left(\bSig_{2\rho}\bP_1+\bI\right)^{-\frac{1}{2}} & \b0\\\b0 & \left(\bSig_{2\bar{\rho}}\bP_2+\bI\right)^{-\frac{1}{2}}\end{array}\right]  \;.
\end{align}

From the argument before Lemma \ref{lin_lem4}, for any diagonal

$\bP\succeq\b0$, linear precoding is an optimal solution for the BC under the
matrix power constraint $\bS_{\bw}=\bT_{\bw}\bT_{\bw}^H$, where
$\bT_{\bw}$ is given by (\ref{lin23}). More precisely, from Theorem
\ref{lin_thm1}, $X=V_1+V_2$ is optimal, where $V_1$ and $V_2$ are
independent Gaussian precoders, respectively corresponding to $W_1$
and $W_2$ with zero means and covariance matrices $\bK^*_{t\bw}$ and
$\bS_{\bw}-\bK^*_{t\bw}$, where $\bK^*_{t\bw}$ is given by
\begin{align} \label{linap37}
\bK^*_{t\bw}= \bS_{\bw}^{\frac{1}{2}}\bC_{\bw} \left[\begin{array}{ccc}(\bC_{1\bw}^H\bC_{1\bw})^{-1} & \b0\\\b0 & \b0\end{array}
\right]\bC_{\bw}^H \bS_{\bw}^{\frac{1}{2}}
\end{align}
and
$\bC_{\bw}$ is the generalized eigenvector matrix for
\begin{align} \label{linap38}
\left(\bS^{\frac{1}{2}}_{\bw}\bH^H\bH\bS^{\frac{1}{2}}_{\bw}+\bI\;,\; \bS^{\frac{1}{2}}_{\bw}\bG^H\bG\bS^{\frac{1}{2}}_{\bw}+\bI\right)   \;.
\end{align}

We note that there exists a unitary matrix $\bPsi$ for which $\bS^{\frac{1}{2}}_{\bw} = \bT_{\bw}\bPsi^H$ \cite{Horn}, where $\bS_{\bw}=\bT_{\bw}\bT^H_{\bw}$. We also note that, from Remark \ref{lin_remap2}, $\bC_{\bw} =\bPsi\overline{\bC}_{\bw}$ and $\bC^H_{\bw}\bC_{\bw} =\overline{\bC}^H_{\bw}\overline{\bC}_{\bw}$. Thus, $\bK^*_{t\bw}$ can be rewritten as
\begin{align}
\bK^*_{t\bw}&= \bT_{\bw}\overline{\bC}_{\bw}\left[\begin{array}{ccc}(\overline{\bC}_{1\bw}^H\overline{\bC}_{1\bw})^{-1} & \b0\\\b0 & \b0\end{array}
\right]\overline{\bC}_{\bw}^H \bT_{\bw}^H
= \bT_{\bw}\left[\begin{array}{ccc}\bI & \b0\\\b0 & \b0\end{array}
\right]\bT_{\bw}^H  \label{linap39} \\
&= \bW\bPhi_{\bw}\bP^{\frac{1}{2}}\left[\begin{array}{ccc}\bI & \b0\\\b0 & \b0\end{array}
\right]\bP^{\frac{1}{2}}\bPhi^H_{\bw}\bW
= \bW\bPhi_{\bw}\left[\begin{array}{ccc}\bP_1 & \b0\\\b0 & \b0\end{array}
\right]\bPhi^H_{\bw}\bW  \; , \label{linap40}
\end{align}
where (\ref{linap39}) comes from (\ref{linap36}), and (\ref{linap40}) comes from (\ref{linap34}).
Consequently, $\bS_{\bw}-\bK^*_{t\bw}$ can be written as
\begin{align}
\bS_{\bw}-\bK^*_{t\bw}= \bT_{\bw}\bT^H_{\bw}-\bK^*_{t\bw}
&=\bW\bPhi_{\bw} \bP\bPhi^H_{\bw}\bW-\bK^*_{t\bw}\nonumber\\
&= \bW\bPhi_{\bw}\left[\begin{array}{ccc}\b0 & \b0\\\b0 & \bP_2\end{array}
\right]\bPhi^H_{\bw}\bW  \; . \label{linap41}
\end{align}

From (\ref{linap40}) and (\ref{linap41}), under the
matrix power constraint $\bS_{\bw}$ given by (\ref{lin24}), the optimal
linear precoding is $X=V_1+V_2$, where precoding signals $V_1$ and
$V_2$ are independent Gaussian vectors with zero means and covariance
matrices given by (\ref{linap40}) and (\ref{linap41}),
respectively. Alternatively, the optimal precoder can be
represented as $X=\bW\bPhi_{\bw}
\left[\begin{array}{c} V'_1 \\ V'_2\end{array}\right]$, where
precoding signals $V'_1$ and $V'_2$ are independent Gaussian vectors
with zero means and diagonal covariance matrices respectively given by
$\bP_1$ and $\bP_2$. In both cases $\mathbb{E}\{XX^H\}=\bS_{\bw}$, and
the same secrecy rate region is achieved.

\begin{singlespace}

\end{singlespace}
\end{document}